\definecolor{DarkGreen}{rgb}{0,.35,0}
\definecolor{DarkBlue}{rgb}{0,0,.5}
\definecolor{DarkRed}{rgb}{.6,0,0}
\renewenvironment{alg}[2][nolabel]%
{
\begin{algg}
\ifthenelse{\equal{#1}{nolabel}}{\caption{ }}{\caption{ \label{#1}}}
\begin{algorithmic}[1]}%
{\end{algorithmic}\end{algg}}
\newcommand{\rank}{\textup{rank}\,}
\DeclareMathAlphabet{\mathitbf}{OML}{cmm}{b}{it}
\renewenvironment{proof}%
  {\par\addvspace{\@bls \@plus 0.5\@bls \@minus 0.1\@bls}\noindent
   {\bfseries\Elproofname}\enspace\ignorespaces}%
  {\qed\par\addvspace{\@bls \@plus 0.5\@bls \@minus 0.1\@bls}}
\newenvironment{proof*}%
  {\par\addvspace{\@bls \@plus 0.5\@bls \@minus 0.1\@bls}\noindent
   {\bfseries\Elproofname}\enspace\ignorespaces}%
  {\par\addvspace{\@bls \@plus 0.5\@bls \@minus 0.1\@bls}}
\newcommand{\FD}{{\F[\partial;\sigma,\delta]}}
\newcommand{\qFD}{{\F(\partial;\sigma,\delta)}}
\DeclareMathOperator{\Ddet}{{\mathcal Det}}
\DeclareMathOperator{\PDdet}{{\delta\varepsilon\tau}}
\DeclareMathOperator{\sign}{{\mathord{\rm sign}}}
\DeclareMathOperator{\M}{\textsf{M}}
\newcommand{\D}{{\partial}}
\newcommand{\QQ}{{\mathbb{Q}}}
\newcommand{\NN}{{\mathbb{N}}}
\newcommand{\ZZ}{{\mathbb{Z}}}
\newcommand{\F}{{\mathsf{F}}}
\newcommand{\K}{{\mathsf{K}}}
\renewcommand{\k}{{\mathsf{k}}}
\newcommand{\shift}{{\mathcal{S}}}
\newcommand{\diff}{{\lower3pt\hbox{\large$'$}}}
\newcommand{\degD}{{\deg_{\D}}}
\newcommand{\nxn}{{n\times n}}
\newcommand{\nx}[1]{{n\times #1}}
\newcommand{\lclm}{{\mbox{\upshape lclm}}}
\newcommand{\lcrm}{{\mbox{\upshape lcrm}}}
\newcommand{\gcld}{{\mbox{\upshape gcld}}}
\newcommand{\gcrd}{{\mbox{\upshape gcrd}}}
\newcommand{\Hbar}{{\bar H}}
\newcommand{\gbar}{{\bar g}}
\newcommand{\vbar}{{\bar v}}
\newcommand{\ubar}{{\bar u}}
\newcommand{\undef}{\perp}
\newcommand{\softO}{{O\,\tilde{}\,}}
\newcommand{\diag}{{\mbox{\upshape diag}}}
\newcommand{\row}{{\mbox{\upshape row}}}
\renewcommand{\L}{{\mathcal L}}
\newcommand{\Glin}{{\widehat{G}}}
\newcommand{\Glinred}{{\widetilde{G}}}
\newcommand{\Alin}{{\widehat{A}}}
\newcommand{\Alinred}{{\widetilde{A}}}
\newcommand{\Tlin}{{\widehat{T}}}
\newcommand{\inorm}[1]{{\|#1\|_\infty}}
\newcommand{\rnorm}[1]{{\|#1\|_\infty^{\mskip-2mu(\varrho)}}}
\newcommand{\nonzero}{{\setminus\{0\}}}
\newcommand{\ssty}{\scriptstyle}
\numberwithin{equation}{section}
\numberwithin{table}{section}
\begin{document}

\begin{frontmatter}

  \title{Computing the Hermite Form of a Matrix of Ore
    Polynomials}

    \author{Mark Giesbrecht}
    \address{Cheriton School of Computer Science, University of Waterloo,
      Waterloo, ON, Canada}
    \ead{mwg@uwaterloo.ca}
    
    \author{Myung Sub Kim}
    \address{Cheriton School of Computer Science, University of Waterloo,
      Waterloo, ON, Canada}
    \ead{ms2kim@uwaterloo.ca}
    
    
    \begin{abstract}
      Let $\FD$ be the ring of Ore polynomials over a field (or a skew
      field) $\F$, where $\sigma$ is an automorphism of $\F$ and
      $\delta$ is a $\sigma$-derivation.  Given a matrix
      $A\in\FD^{m\times n}$, we show how to compute the Hermite form
      $H$ of $A$ and a unimodular matrix $U$ such that $UA=H$.  The
      algorithm requires a polynomial number of operations in $\F$ in
      terms of the dimensions $m$ and $n$, and the degrees (in $\D$) of the
      entries in $A$.  When $\F=\k(z)$ for some field $\k$, it also
      requires time polynomial in the degrees in $z$ of the
      coefficients of the entries, and if $\k=\QQ$
      it requires time polynomial in the bit length of the
      rational coefficients as well.  Explicit analyses are provided for the
      complexity, in particular for the important cases of
      differential and shift polynomials over $\QQ(z)$.  To accomplish
      our algorithm, we apply the Dieudonn\'e determinant and
      quasideterminant theory for Ore polynomial rings to get explicit
      bounds on the degrees and sizes of entries in $H$ and $U$.
    \end{abstract}
    
  \end{frontmatter}
  
  \section{Introduction}

The Ore polynomials are a natural algebraic structure which captures
difference, $q$-difference, differential, and other non-commutative
polynomial rings.  The basic concepts of pseudo-linear algebra are presented
nicely by \cite{BronsteinPetkovsek:1996}; see \citep{Ore31} for the
seminal introduction.

On the other hand, canonical forms of matrices over commutative
principal ideal domains (such as $\ZZ$ or $\F[x]$, for a field $\F$)
have proven invaluable for both mathematical and computational
purposes. One of the successes of computer algebra over the past three
decades has been the development of fast algorithms for computing
these canonical forms. These include triangular forms such as the
Hermite form \citep{Her51}, low degree forms like the Popov
form \citep{Popov:1972}, as well as the diagonal Smith form
\citep{Smith:1861}.

Canonical forms of matrices over non-commutative domains, especially
rings of differential and difference operators, are also extremely
useful. These have been examined at least since the work of \cite{Dickson:1923},
\cite{Wedderburn:1932}, and \cite{Jac43}.  Recently they have found
uses in control theory \citep{ChyQua05,Zer06,Hal08}.  Computations
with multidimensional linear systems over Ore algebras are nicely
developed by \cite{ChyQua07}, and an excellent implementation of many
fundamental algorithms is provided in the OreModules package of Maple.

In this paper we consider canonical forms of matrices of Ore
polynomials over a skew field $\F$. Let
$\sigma:\F\to\F$ be an automorphism of $\F$ and $\delta:\F\to\F$ be a
$\sigma$-derivation.  That is, for any $a,b\in\F$,
$\delta(a+b)=\delta(a)+\delta(b)$ and
$\delta(ab)=\sigma(a)\delta(b)+\delta(a)b$.  We then define $\FD$ as
the set of usual polynomials in $\F[\D]$ under the usual addition, but
with multiplication defined by
\[
\D a = \sigma(a) \D + \delta(a)
\]
for any $a\in\F$.  This is well-known to be a left (and right)
principal ideal domain, with a straightforward Euclidean algorithm
(see \citep{Ore33}).

Some important cases over the field of rational functions $\F=\k(z)$ over a field
$\k$ are as follows:
\begin{itemize}
\item[(1)] $\sigma(z)=\shift(z)=z+1$ is a so-called \emph{shift}
  automorphism of $\k(z)$, and $\delta$ identically zero on $\k$.
  Then $\k(z)[\D;\shift,0]$ is generally referred to as the ring of
  \emph{shift polynomials}. With a slight abuse of
  notation we write $\k(z)[\D;\shift]$ for this ring.
\item[(2)] $\delta(z)=1$ and $\sigma(z)=z$, so $\delta(h(z))=h'(z)$
  for any $h\in\k(z)$ with $h'$ its usual derivative.  Then
  $\k(z)[\D;\sigma,\delta]$ is called the ring of \emph{differential
    polynomials}.  With a slight abuse of notation we write
  $\k(z)[\D;\diff]$ for this ring.

  A primary motivation in the definition of $\k(z)[\D;\diff]$ is that
  there is a natural action on the space of infinitely differentiable
  functions in $z$, namely the differential polynomial
  \[
  a_m\D^m+a_{m-1}\D^{m-1}+\cdots + a_1\D+a_0\in\k(z)[\D;\diff]
  \]
  acts as the linear differential operator
  \[
  a_m(z)\frac{d^m f(z)}{d z^m}+a_{m-1}(z)\frac{d^{m-1}f(z)}{d
    z^{m-1}}+ \cdots+a_1(z)\frac{d f(z)}{d z}+a_0(z)f(z)
  \]
  on an infinitely differentiable function $f(z)$.  See
  \citep{BronsteinPetkovsek:1996}. 
\end{itemize}


The (row) Hermite form we will compute here is achieved purely by row
operations, and we treat a matrix $A\in\FD^{m\times n}$ as
generating the left $\FD$-module of its rows.  Thus, by \emph{left row
  rank}, we mean the rank of the free left $\FD$-module of rows of $A$, and
will denote this simply as the \emph{rank} of $A$ for the remainder of
the paper.  A matrix $H\in\FD^{m\times n}$ of rank $r$ is in \emph{Hermite
form} if an only if
\begin{itemize}
\item[\emph{(i)}] Only the first $r$ rows are non-zero;
\item[\emph{(ii)}] In each row the leading (first non-zero) element is monic;
\item[\emph{(iii)}] All entries in the column below the leading element in
  any row are zero;
\item[\emph{(iv)}] All entries in the column above the leading element in any
  row are of lower degree than the leading element.
\end{itemize}
For square matrices of full rank the Hermite form will thus be upper
triangular with monic entries on the diagonal, whose degrees dominate
all other entries in their column.

For example, in the differential polynomial ring $\QQ(z)[\D;\diff]$ as
above:
\begin{equation}
\label{eq:exA}
A=
\begin{pmatrix}
{\ssty 1}+{\ssty (z+2)}\D+\D^{2} & {\ssty 2}+{\ssty (2z+1)}\D & {\ssty
  1}+{\ssty (1+z)}\D\\
{\ssty (2z+z^{2})}+{\ssty z}\D & {\ssty (2+2z+2z^{2})}+\D & {\ssty 4z+z^{2}}\\
{\ssty (3+z)}+{\ssty (3+z)}\D+\D^{2} & {\ssty (8+4z)}+{\ssty
  (5+3z)}\D+\D^{2} & {\ssty (7+8z)}+{\ssty (2+4z)}\D
\end{pmatrix}
\in\QQ(z)[\D;\diff]^{3\times 3}
\end{equation}
has Hermite form
\[
H=
\begin{pmatrix}
{\ssty (2+z)}+\D & {\ssty 1+2z} & \frac{-2+z+2z^{2}}{2z}-\frac{1}{2z}\D\\
0 & {\ssty (2+z)}+\D & {\ssty 1}+\frac{7z}{2}+\frac{1}{2}\D\\
0 & 0 & -\frac{2}{z}+\frac{-1+2z+z^{2}}{z}\D+\D^{2}
\end{pmatrix}
\in\QQ(z)[\D;\diff]^{3\times 3}.
\]
Note that the Hermite form may have denominators in $z$.  Also, while
this example does not demonstrate it, the degrees in the Hermite form,
in both numerators and denominators in $z$ and $\D$, are generally
substantially larger than in the input (in Theorem
\ref{thm:sqhermcost} we will provide polynomial, though quite large,
bounds on these degrees, and suspect these bounds may well be met
generically).

For any matrix $A\in\FD^\nxn$ of full rank, there exists a unique
unimodular matrix $U\in\FD^\nxn$ (i.e., a matrix whose inverse exists
and is also in $\FD^\nxn$) such that $UA=H$ is in Hermite form. This
form is canonical in the sense that if two matrices $A,B\in\FD^\nxn$
are such that $A=PB$ for unimodular $P\in\FD^\nxn$ then the Hermite
form of $A$ equals the Hermite form of $B$.  Existence and uniqueness
of the Hermite form are established much as they are over $\ZZ^\nxn$
in Section \ref{sec:exuniq}.  For rank deficient matrices and
rectangular matrices, the Hermite form also exists but the
transformation matrix may not be unique.  See Section
\ref{sec:oddshapes} for further details.

In commutative domains such as $\ZZ$ and $\F[x]$ there have been
enormous advances in the past two decades in computing Hermite, Smith
and Popov forms.  Polynomial-time algorithms for the Smith and Hermite
forms over $\F[x]$ were developed by \cite{Kannan:1985}, with
important advances by \cite{KalKri87}, \cite{Villard:1995},
\cite{MuldersStorjohann:2003}, \cite{PerSte10}, and many others.  One of the key
features of this recent work in computing canonical forms has been a
careful analysis of the complexity in terms of matrix size, entry
degree, and coefficient swell. Clearly identifying and analyzing the
cost in terms of all these parameters has led to a dramatic drop in
both theoretical and practical complexity.

Computing the classical Smith and Hermite forms of matrices over Ore
domains has received less attention though canonical forms of
differential polynomial matrices have applications in solving
differential systems and control theory (see
\citep{Hal08,KotLei08}). \cite{AbramovBronstein:2001} analyze the
number of reduction steps necessary to compute a \emph{row-reduced
  form}, while \linebreak
\cite{BeckermannChengLabahn:2006} analyze the
complexity of row reduction in terms of matrix size, degree and the
sizes of the coefficients of some shifts of the input
matrix. \cite{BeckermannChengLabahn:2006} demonstrate tight bounds on
the degree and coefficient sizes of the output, which we will employ
here. For the Popov form, \cite{cheng:phd} gives an algorithm for
matrices of shift polynomials. Cheng's approach involves order bases
computation in order to eliminate lower order terms of Ore polynomial
matrices. A main contribution of \cite{cheng:phd} is to give an
algorithm computing the rank and a row-reduced basis of the left
nullspace of a matrix of Ore polynomials in a fraction-free way. This
idea is extended in \cite{DaviesChengLabahn:2008} to compute the Popov
form of general Ore polynomial matrices.  They reduce the problem of
computing Popov form to a nullspace computation. However, though Popov
form is useful for rewriting high order terms with respect to low
order terms, we want a different canonical form more suited to solving
system of linear diophantine equations. Since the Hermite form is
upper triangular, it meets this goal nicely, not to mention the fact
that it is a ``classical'' canonical form.  An implementation of the
basic (exponential-time) Hermite algorithm is provided by
\cite{Cul05}.  In \citep{GieKim09}, we present a polynomial-time
algorithm for the Hermite form over $\QQ(z)[\D;\diff]$, for full rank
square matrices.  While it relies on similar techniques as this
current paper, the cost of the algorithm is higher,
the coefficient bounds weaker, and it does not work for matrices of
general Ore polynomials.

The related ``two-sided'' problem of computing the Jacobson
(non-commutative Smith) canonical form has also been recently
considered. \cite{BliCid03} implement the standard algorithm in the
package \emph{Janet}.  \cite{LevSch11} provide a very complete
implementation, for the full Ore case over skew fields, of a Jacobson
form algorithm using Gr\"obner bases in Singular.  \cite{Middeke:2008}
has recently demonstrated that the Jacobson form of a matrix of
differential polynomials can be computed in time polynomial in the
matrix size and degree (but the coefficient size is not analyzed).
\cite{GieHei12} give a probabilistic polynomial-time algorithm for
this problem in the differential case.

One of the primary difficulties in both developing efficient
algorithms for matrices of Ore polynomials, and in their analysis, is
the lack of a standard notion of determinant, and the important bounds
this provides on degrees in eliminations.  In Section
\ref{sec:detbounds} we establish bounds on the degrees of entries in
the inverse of a matrix over any non-commutative field with a
reasonable degree function.  We do this by introducing the
\emph{quasideterminant} of \cite{GelRet91,GelRet92} and analyzing its
interaction with the degree function.  We also prove similar bounds on
the degree of the Dieudonn\'e determinant.  In both cases, the bounds
are essentially the same as for matrices over a commutative function
field.

In Section \ref{sec:orebounds} we consider matrices over the Ore
polynomials and bound the degrees of entries in the Hermite form and
corresponding unimodular transformation matrices.  We also bound the
degrees of the Dieudonn\'e determinants of these matrices.

In Section \ref{sec:compute} we present our algorithm for the Hermite
form. The degree bounds and costs of our algorithms are summarized as
follows, from Theorems \ref{thm:Hdeg}, \ref{thm:HermDegV} and
\ref{thm:sqhermcost}.

\newtheorem*{thma}{Summary Theorem}

\begin{thma}
  Let $A\in\k[z][\D;\sigma,\delta]^\nxn$ have full rank with entries
  of degree at most $d$ in $\D$, and coefficients of degree at most
  $e$ in $z$.  Let $H\in\k(z)[\D;\sigma,\delta]^\nxn$ be the Hermite
  form of $A$ and $U\in\k(z)[\D;\sigma,\delta]^\nxn$ such that $UA=H$.
  \begin{enumerate}
  \item[(a)] The sum of degrees in $\D$ in any row of $H$ is at most
    $nd$, and each entry in $U$ has degree in $\D$ at most $(n-1)d$.
  \item[(b)] All coefficients from $\k(z)$ of entries of $H$ and $U$
    have degrees in $z$, of both numerators and denominators, bounded
    by $O(n^2de)$.
  \item[(c)] We can compute $H$ and $U$ deterministically with
    $\softO(n^9d^4e)$%
    \footnote[2]{We employ soft-Oh notation: for functions $\sigma$
      and $\varphi$ we say $\sigma\in \softO(\varphi)$ if $\sigma\in
      O(\varphi \log^c\varphi)$ for some constant $c\geq 0$.},
    operations in $\k$.
  \item[(d)] Assume $\k$ has at least $4n^2de$ elements.  We can
    compute the Hermite form $H$ and $U$ with an expected number of
    $\softO(n^7d^3e)$ of operations in $\k$ using standard
    polynomial arithmetic.  This algorithm is probabilistic of the Las
    Vegas type; it never returns an incorrect answer.
    \end{enumerate}
\end{thma}
The cost of our algorithm for Ore polynomials over an arbitrary skew
field, as well as over more specific fields like $\QQ(z)$, is also
shown to be polynomially bounded, and is discussed in Section
\ref{sec:compute}.  

It should be noted from the above theorem that the output is of quite
substantial size.  The transformation matrix $U$ as above is an
$n\times n$ matrix of polynomials in $\D$ of degree bounded by $nd$ in
$\D$ and each coefficient has degree bounded by $n^2de$, for a total
size of $O(n^5d^2e)$ elements of $\k$.  While we have not proven our
size bounds are tight, we have some confidence they are quite strong.

The algorithm presented in Section \ref{sec:compute} is derived from
the ``linear systems'' approach of \cite{KalKri87} and \cite{Sto94}.
In particular, it reduces the problem to that of linear system solving
over the generally commutative ground field (e.g., $\k(z)$).  There
are efficient algorithms and implementations for solving this problem.
While we expect that further algorithmic refinements and reductions in
cost can be achieved before an industrial-strength implementation is
made, the general approach of reducing to well-studied computational
problem in a commutative domain would seem to have considerable merit
in theory and practice.

In Section \ref{sec:oddshapes} we show that for the case of
rank-deficient and rectangular matrices, the computation of the
Hermite form is reduced to the full rank, square case.


  \section{Existence and Uniqueness of the Hermite form over Ore domains}
\label{sec:exuniq}

In this section we establish the basic existence and uniqueness of
Hermite forms over Ore domains. These follow similarly to the
traditional proofs over $\ZZ$; see for example \citep[Theorems II.2
and II.3]{New72}, which we outline below.

\begin{fact}[\cite{Jac43}, Section 3.7]
  \label{fact:2x2uni}
  Let $a,b\in\FD$, not both zero with $g=\gcrd(a,b)$, $u,v\in\FD$ such
  that $ua+vb=g$, and $s,t\in\FD$ such that $sa=-tb=\lclm(a,b)$.
  Then 
  \[
  W=\begin{pmatrix}
    u & v\\
    s & t
  \end{pmatrix}
  \in\FD^{2\times 2}
  ~~\mbox{is such that}~~
  W
  \begin{pmatrix}
    a\\ b
  \end{pmatrix}
  =
  \begin{pmatrix}
    g\\ 0
  \end{pmatrix},
  \]
  and $W$ is unimodular.
\end{fact}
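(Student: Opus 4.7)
The easy identity $W(a,b)^T=(g,0)^T$ is a direct computation --- the top coordinate is $ua+vb=g$ by hypothesis, the bottom is $sa+tb=0$ from $sa=-tb$ --- so my plan focuses on the unimodularity of $W$. The strategy is to show that right multiplication by $W$ defines a bijective left-$\FD$-module endomorphism $\Phi$ of $\FD^2$; the inverse of such a bijection is itself right multiplication by some $V\in\FD^{2\times 2}$, and comparing the two compositions then yields $WV=I=VW$. I would first handle the degenerate cases in which exactly one of $a,b$ vanishes separately (there $g$ equals the nonzero entry and $W$ comes out triangular with a unit on the diagonal, under the natural convention $\lclm(a,0)=0$); the rest of the argument assumes $a,b$ are both nonzero, so that $\lclm(a,b)\neq 0$ by the Ore property of $\FD$ and in particular $s\neq 0$.

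Injectivity of $\Phi$ is the quick part. If $(p,q)W=0$ then right multiplying by $(a,b)^T$ and invoking the identity already established gives $pg=0$, hence $p=0$; the surviving relation $(0,q)W=(qs,qt)=0$, together with $s\neq 0$ and the fact that $\FD$ is a domain, forces $q=0$.

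The heart of the proof is surjectivity, and it rests on characterising the left syzygy module
\[
K=\{(p,q)\in\FD^2 : pa+qb=0\}=\FD(s,t).
\]
The inclusion $\FD(s,t)\subseteq K$ is immediate from $sa+tb=0$. For the reverse, any $(p,q)\in K$ satisfies $pa=-qb$, which is a common left multiple of $a$ and $b$, hence by the defining property of $\lclm$ equals $r\cdot\lclm(a,b)=rsa$ for some $r\in\FD$; cancelling $a$ on the right (valid in a domain) gives $p=rs$, and symmetrically $q=rt$. Given an arbitrary $(m,n)\in\FD^2$, the element $ma+nb$ lies in the left ideal $\FD a+\FD b=\FD g$ (via the Bezout hypothesis and $a,b\in\FD g$), so write $ma+nb=\xi g=\xi(ua+vb)$; then $(m,n)-\xi(u,v)\in K$ equals some $r(s,t)$, which rearranges to $(m,n)=(\xi,r)W$ --- the required preimage. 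The one step I expect to need the most care is the syzygy characterisation, but it is a standard consequence of the Euclidean (Ore) structure of $\FD$ and I do not anticipate a real obstacle in pinning down the details.
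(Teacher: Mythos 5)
Your proof is correct. Note that the paper itself offers no argument for this statement: it is imported as a Fact with a citation to Jacobson's \emph{Theory of Rings}, so there is no in-paper proof to compare against. Your argument is a complete, self-contained justification along the classical lines: the identity $W(a,b)^T=(g,0)^T$ is immediate, and unimodularity follows from showing that $(p,q)\mapsto(p,q)W$ is a bijection of $\FD^{1\times 2}$, whose inverse is again right multiplication by a matrix. The two pillars both check out: the syzygy characterisation $\{(p,q):pa+qb=0\}=\FD\,(s,t)$ is exactly the statement that $\FD a\cap\FD b$ is the principal left ideal generated by $\lclm(a,b)$ (valid since $\FD$ is a left PID), together with right cancellation of the nonzero elements $a$ and $b$; and $\FD a+\FD b=\FD g$ gives the preimage of an arbitrary $(m,n)$. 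The only point deserving one extra sentence is the degenerate case, say $b=0$, $a\neq 0$: there the hypotheses force $s=0$ but leave $t$ entirely unconstrained (since $-tb=0$ holds for every $t$), so $W$ is unimodular only if one additionally \emph{chooses} $t$ to be a unit, e.g.\ $t=1$. That is an imprecision inherited from the statement of the Fact rather than a gap in your argument, but you should make the choice explicit rather than asserting that $W$ ``comes out'' triangular with a unit on the diagonal.
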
 

This is easily generalized to $n\times n$ matrices as follows.
\begin{lem}
  \label{lem:nxnuni}
  Let $w=(w_1,\ldots,w_n)^T\in\FD^\nx1$, and
  $i,j\in\{1,\ldots,n\}$. There exists a matrix
  \[
  E=E(i,j; w)\in\FD^\nxn
  \]
  such that $Ew=(u_1,\ldots,u_n)^T\in\FD^\nx1$, with
  $u_i=\gcrd(w_i,w_j)$ and $u_j=0$.
\end{lem}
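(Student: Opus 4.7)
The plan is to reduce the claim directly to Fact \ref{fact:2x2uni} by embedding the $2\times 2$ unimodular matrix guaranteed there into an $n\times n$ matrix that acts as the identity on all coordinates other than $i$ and $j$.

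First I would apply Fact \ref{fact:2x2uni} to the pair $(a,b)=(w_i,w_j)$. If both $w_i$ and $w_j$ are zero, take $E$ to be the identity (then $u_i=0=\gcrd(w_i,w_j)$ and $u_j=0$ trivially). Otherwise, Fact \ref{fact:2x2uni} supplies $u,v,s,t\in\FD$ with
\[
\begin{pmatrix} u & v\\ s & t\end{pmatrix}
\begin{pmatrix} w_i\\ w_j\end{pmatrix}
=\begin{pmatrix} \gcrd(w_i,w_j)\\ 0\end{pmatrix},
\]
and the $2\times 2$ block on the left is unimodular over $\FD$.

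Next I would define $E=E(i,j;w)\in\FD^{n\times n}$ to be the identity matrix with four entries overwritten, namely
\[
E_{ii}=u,\quad E_{ij}=v,\quad E_{ji}=s,\quad E_{jj}=t,
\]
and all remaining diagonal entries equal to $1$, every other off-diagonal entry equal to $0$. A direct computation of $Ew$ then gives $u_i = uw_i+vw_j=\gcrd(w_i,w_j)$, $u_j = sw_i+tw_j=0$, and $u_k=w_k$ for every $k\notin\{i,j\}$, which is exactly the conclusion. (Unimodularity of $E$ is immediate, since after simultaneously permuting rows and columns to move $i,j$ into the top-left corner, $E$ becomes block diagonal with the unimodular $W$ and an identity block; we do not need unimodularity of $E$ for the present statement, but it will be useful later.)

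There is essentially no obstacle here beyond bookkeeping of indices; the content of the lemma is entirely carried by Fact \ref{fact:2x2uni}, and the only thing to check is that the non-$(i,j)$ coordinates of $w$ are left untouched, which follows from the identity structure of $E$ outside the rows and columns indexed by $i$ and $j$.
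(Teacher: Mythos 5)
Your proof is correct and follows essentially the same route as the paper: embed the $2\times 2$ matrix $W$ from Fact \ref{fact:2x2uni} into the identity at the rows and columns indexed by $i$ and $j$, and check that the remaining coordinates are untouched. The only (cosmetic) difference is that the paper separately handles the case $w_i=0$, $w_j\neq 0$ with a plain row swap, since there $\lclm(w_i,w_j)=0$ forces $t=0$ and leaves $s$ underdetermined in Fact \ref{fact:2x2uni}; your blanket appeal to the fact glosses over this edge case but the construction is the same.
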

\begin{proof}
  If both $w_i=w_j=0$ are zero, then $E$ is the identity matrix.  If
  $w_i=0$ and $w_j\neq 0$, then let $E$ be the permutation matrix
  which swaps rows $j$ and $i$.

  Otherwise, let $W=\begin{pmatrix} u & v\\ s & t \end{pmatrix}$ be as
  in Fact \ref{fact:2x2uni} with $(a,b)^T=(w_i,w_j)^T$, and
  $W(w_i,w_j)^T=(g,0)^T$ for $g=\gcrd(w_i,w_j)$.  Define $E(i,j; w)$
  as the identity matrix except
  \[
  E_{ii}=u, ~~~ E_{ij}=v, ~~~ E_{ji}=s, ~~~E_{jj}=t.
  \]
  Clearly $E$ satisfies the desired properties.
\end{proof}

We note that off diagonal entries in a triangular matrix can be
unimodularly reduced by the diagonal entry below it.

\begin{lem}
  \label{lem:unired}
  Let $J\in\FD^\nxn$ be upper triangular with non-zero diagonal. There
  exists a unimodular matrix $R\in\FD^\nxn$, which is upper triangular
  and has ones on the diagonal, such that in every column of $RJ$, the
  degree of each diagonal entry is strictly larger than the degrees of
  the entries above it.
\end{lem}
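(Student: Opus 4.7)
The plan is to build $R$ as a product of elementary upper-triangular row operations that use the right division algorithm in $\FD$ to reduce each above-diagonal entry modulo the diagonal entry in its column. Since $\FD$ is a right Euclidean domain, for any $a\in\FD$ and any nonzero $b\in\FD$ there exist $q,r\in\FD$ with $a=qb+r$ and $\deg_\D r<\deg_\D b$. Applied to $a=J_{ij}$ and $b=J_{jj}$ with $i<j$, the operation ``row $i\leftarrow$ row $i\ -\ q\cdot\text{row }j$'' replaces $J_{ij}$ by its right remainder $r$ modulo $J_{jj}$, and corresponds to left multiplication of $J$ by the elementary matrix $I-q\,e_{ij}$, which is upper triangular with ones on the diagonal and is its own inverse (with $q$ replaced by $-q$), hence unimodular.

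I would process the columns from left to right, $j=1,2,\ldots,n$, and within each column reduce the entries $(i,j)$ for $i=j-1,j-2,\ldots,1$ in turn (the order within a column is immaterial because each operation at $(i,j)$ only alters row $i$). The crucial invariant is that the elementary step at position $(i,j)$ changes row $i$ only in columns $j,j+1,\ldots,n$, because row $j$ of the (current) matrix has zero entries in columns $1,\ldots,j-1$ by upper triangularity, and this upper triangularity is preserved throughout (the pivot $J_{jj}$ itself is never modified, since we only subtract multiples of row $j$ from strictly higher rows). Consequently the reductions performed on columns $1,\ldots,j-1$ are not disturbed when we move on to column $j$, and at the end of column $j$ every entry strictly above $J_{jj}$ has degree strictly less than $\deg_\D J_{jj}$.

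Letting $R$ be the product (in the order the operations are applied, read from the outside in) of all these elementary matrices yields an upper-triangular unimodular matrix with ones on the diagonal such that $RJ$ has the desired property. The one point that needs a brief explicit check is that right division in $\FD$ does strictly decrease the degree of the remainder below $\deg_\D J_{jj}$ (which follows from the standard Ore Euclidean algorithm cited in the introduction) and that the product of upper-triangular unimodular matrices with unit diagonal is again of the same form; both are immediate.
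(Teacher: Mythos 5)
Your proof is correct and follows the same route as the paper: reduce each above-diagonal entry by right division with remainder against the diagonal entry below it, realizing each reduction as left multiplication by an elementary upper-triangular unimodular matrix embedded in the identity. The paper's proof is terser, but your added bookkeeping (column order, preservation of upper triangularity, and of earlier reductions) is exactly the detail it leaves implicit.
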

\begin{proof}
  For any $a,b\in\FD$ with $b\neq 0$, we have $a=qb+r$ for quotient
  $q\in\FD$ and remainder $r\in\FD$ with $\degD r<\degD b$, and
  \[
  \begin{pmatrix}
    1 & -q \\
    0 & 1
  \end{pmatrix}
  \begin{pmatrix}
    a \\ b 
  \end{pmatrix}
  = 
  \begin{pmatrix}
    r\\ b
  \end{pmatrix}.
  \]
  Embedding such unimodular matrices $Q$ into $n\times n$ identity
  matrices, we can ``reduce'' the off diagonal entries of $J$ by the
  diagonal entries below them.
\end{proof}

\begin{thm} 
  \label{thm:sqherm}
  Let $A\in\FD^{n\times n}$ have full rank. Then there exists a matrix
  $H\in\FD^\nxn$ in Hermite form, and a unimodular matrix
  $U\in\FD^\nxn$, such that $UA=H$.
\end{thm}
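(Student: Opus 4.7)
The plan is to mimic the classical column-by-column triangularization used over $\ZZ$, using Lemma \ref{lem:nxnuni} as the replacement for the $2\times 2$ GCD reduction, and then applying Lemma \ref{lem:unired} for the off-diagonal reduction and a final diagonal normalization to enforce monicity.

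I would proceed inductively on the column index. For $k=1$, I apply $E(1,2;A_{*,1})$, then $E(1,3;\cdot)$ on the result, and so on up to $E(1,n;\cdot)$, where $A_{*,1}$ denotes the (current) first column. By Lemma \ref{lem:nxnuni}, after these $n-1$ unimodular left-multiplications, the first column has its only possibly non-zero entry in position $(1,1)$, and that entry is a GCRD of the original entries of the first column. Since the product of the unimodular matrices applied so far is still unimodular, the accumulated transformation is unimodular. I then repeat the same process on rows $2,\dots,n$ and column $2$, and so on, for columns $3,\dots,n$. The end result is an upper triangular matrix $J=U_1 A\in\FD^\nxn$ with $U_1$ unimodular.

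Because $A$ has full rank and $U_1$ is unimodular, $J$ also has full rank, and in particular every diagonal entry of $J$ is non-zero (any zero on the diagonal of an upper triangular matrix would force the rank to drop). Now Lemma \ref{lem:unired} applies: there exists a unimodular upper triangular matrix $R\in\FD^\nxn$ with ones on the diagonal such that $RJ$ is upper triangular, has the same diagonal as $J$, and in each column every above-diagonal entry has strictly smaller $\degD$ than the diagonal entry of that column. This establishes conditions \emph{(i)}, \emph{(iii)}, and \emph{(iv)} of the Hermite form (condition \emph{(i)} being vacuous in the full-rank square case since $r=n$).

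Finally, to obtain condition \emph{(ii)}, let $D=\diag(c_1^{-1},\dots,c_n^{-1})\in\FD^\nxn$, where $c_i\in\F^\times$ is the leading coefficient (in $\D$) of the $i$th diagonal entry of $RJ$. Each $c_i^{-1}$ exists in $\F$ since $\F$ is a skew field, so $D$ is unimodular in $\FD^\nxn$. Setting $U=DRU_1$ and $H=DRJ$, we have $UA=H$ with $U$ unimodular and $H$ in Hermite form. The only step that requires slight care is the full-rank argument guaranteeing non-vanishing diagonal entries after the first stage; everything else is an immediate application of Lemmas \ref{lem:nxnuni} and \ref{lem:unired} together with the fact that $\F$ is a skew field, so no serious obstacle arises.
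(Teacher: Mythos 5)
Your proof is correct and follows essentially the same route as the paper: repeated application of Lemma \ref{lem:nxnuni} column by column to triangularize, followed by the off-diagonal reduction of Lemma \ref{lem:unired}. Your explicit handling of the non-vanishing of the diagonal via full rank and of monicity via a unimodular diagonal scaling are details the paper leaves implicit (the GCRD being normalizable up to a unit of $\F$), but they do not constitute a different approach.
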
 
\begin{proof} 
  The proof follows by observing the traditional (but inefficient)
  algorithm to compute the Hermite form.  We first use a (unimodular
  row) permutation to move any non-zero element in column 1 into the
  top left position; failure to find a non-zero element in column 1
  means our matrix is rank deficient.  We then repeatedly apply Lemma
  \ref{lem:nxnuni} to find $Q_1$ such that $Q_1A$ only has the top
  left position non-zero.  This same procedure is then repeated on
  subdiagonal of columns $2, 3, \ldots, n$ in sequence, so there
  exists a unimodular matrix $Q=Q_1\cdots Q_n$ such that $QA$ is upper
  triangular.  The matrix is then unimodularly reduced using Lemma
  \ref{lem:unired}.
\end{proof}

\begin{thm} 
  \label{thm:hermuniq}
  Let $A\in\FD^\nxn$ have full row rank.  Suppose $UA=H$ for unimodular
  $U\in\FD^\nxn$ and Hermite form $H\in\FD^\nxn$.  Then both $U$ and $H$
  are unique.
\end{thm}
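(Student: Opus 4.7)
The plan is to reduce both uniqueness claims to the single assertion that any unimodular $V\in\FD^\nxn$ with $VH_2=H_1$, for $H_1,H_2$ both in Hermite form, must be the identity. Given two representations $U_1A=H_1$ and $U_2A=H_2$, I set $V=U_1U_2^{-1}$; once $V=I$ is established, the equality $H_1=H_2$ is immediate, and because $A$ has full rank over $\FD$ it is invertible as a matrix over the Ore quotient skew field $\qFD$, so $U_1A=U_2A$ cancels on the right to give $U_1=U_2$.

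To show $V=I$, I would run three column-by-column passes through the identity $VH_2=H_1$, using throughout that $H_2$ is upper triangular with nonzero diagonal (so $(H_2)_{j,k}=0$ for $j>k$ and $(H_2)_{k,k}\neq 0$). In the first pass, induct on $k$ to conclude $V_{i,k}=0$ for every $i>k$: the only term of $(VH_2)_{i,k}$ that survives both the triangularity of $H_2$ and the inductive hypothesis on earlier columns of $V$ is $V_{i,k}(H_2)_{k,k}$, which equals $(H_1)_{i,k}=0$, and integrality of $\FD$ together with $(H_2)_{k,k}\neq 0$ forces $V_{i,k}=0$. At the end of this pass $V$ is upper triangular, and I invoke the standard fact that an upper triangular matrix over a ring is invertible if and only if its diagonal entries are units of the ring; since the units of $\FD$ are precisely $\F^\ast$, each $V_{k,k}$ lies in $\F^\ast$. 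In the third pass I induct on $k$ once more to show the diagonal is $1$ and the strict upper triangle is $0$: the diagonal equation is $V_{k,k}(H_2)_{k,k}=(H_1)_{k,k}$ with the scalar $V_{k,k}\in\F^\ast$ on the left, so clause (ii) of the Hermite form (monicity) forces $V_{k,k}=1$; for $i<k$, the previously-handled columns collapse the identity to $V_{i,k}(H_2)_{k,k}=(H_1)_{i,k}-(H_2)_{i,k}$, and clause (iv) bounds both summands on the right strictly below $\degD(H_2)_{k,k}$, so by additivity of $\degD$ on $\FD$ we must have $V_{i,k}=0$.

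The only subtle point is the interaction of leading coefficients with the Ore commutation rule $\D a=\sigma(a)\D+\delta(a)$: in general, leading coefficients of products in $\FD$ are twisted by iterates of $\sigma$, so a naive monicity comparison on two sides of a product need not go through. The second pass is arranged precisely to sidestep this, because once $V_{k,k}$ is known to lie in $\F$ the left multiplication $V_{k,k}(H_2)_{k,k}$ introduces no twist in its leading term, making the monicity comparison in the third pass uncomplicated. With that caveat, the whole argument is a careful Ore analogue of the classical uniqueness proof for Hermite forms over $\ZZ$ or $\F[x]$, with additivity of $\degD$ on $\FD$ playing the role that absolute value or polynomial degree plays in the commutative case.
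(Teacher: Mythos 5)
Your proof is correct and follows essentially the same route as the paper's: form the unimodular transition matrix between the two Hermite forms, show it is upper triangular with diagonal entries equal to $1$, and kill the strict upper triangle by comparing $\degD$ of $V_{i,k}(H_2)_{k,k}$ against the bound from clause (iv). Your treatment is somewhat more careful than the paper's at the step where the diagonal entries are identified as units of $\FD$ (the paper deduces $W_{jj}=1$ from monicity alone, which implicitly uses unimodularity as you make explicit), but this is a refinement of the same argument rather than a different approach.
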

\begin{proof}
  Suppose $H$ and $G$ are both Hermite forms of $A$. Thus, there exist
  unimodular matrices $U$ and $V$ such that $UA=H$ and $VA=G$, and
  $G=WH$ where $W=VU^{-1}$ is unimodular.  Since $G$ and $H$ are upper
  triangular matrices, we know $W$ is as well.  Moreover, since $G$
  and $H$ have monic diagonal entries, the diagonal entries of $W$
  equal $1$. We now prove $W$ is the identity matrix. By way of
  contradiction, first assume that $W$ is not the identity, so there
  exists an entry $W_{ij}$ which is the first nonzero off-diagonal
  entry on the $i$th row of $W$. Since $i<j$ and since $W_{ii}=1$,
  $G_{ij}=H_{ij}+W_{ij}H_{jj}$. Because $W_{ij}\neq 0$, we see $\degD
  G_{ij}\geq\degD G_{jj}$, which contradicts the definition of the
  Hermite form.  

  Uniqueness of $U$ is easily established since $UA=VA$, so
  $U^{-1}V=I$ and $U=V$.
\end{proof}


  \section{Non-commutative determinants and degree bounds for linear
  equations}
\label{sec:detbounds}

One of the main difficulties in matrix computations in skew (non-commutative)
fields, and a primary difference with the commutative case, is the
lack of the usual determinant.  In particular, the determinant allows
us to bound the degrees of solutions to systems of equations, the size
of the inverse or other decompositions, not to mention the degrees at
intermediate steps of computations, through Hadamard-like formulas and
Cramer's rules.

The most common non-commutative determinant was defined by
\cite{Dieudonne:1943}, and is commonly called the \emph{Dieudonn\'e
  determinant}.  It preserves some of the multiplicative properties of
the usual commutative determinant, but is insufficient to establish
the degree bounds we require (amongst other inadequacies). 
\cite{GelRet91,GelRet92} introduced \emph{quasideterminants} and a
rich associated theory as a central tool in linear algebra over
non-commutative rings.  Quasideterminants are more akin to the
(inverse of the) entries of the classical adjoint of a matrix than a
true determinant.  We employ quasideterminants here to establish
bounds on the degree of the entries in the inverse of a matrix, and on
the Dieudonn\'e determinant in this section, and on the Hermite form
and its multiplier matrices in Section \ref{sec:orebounds}.

We will establish bounds on degrees of quasideterminants and
Dieudonn\'e determinants for a general skew
field $\K$ with a \emph{degree} $\deg:\K\to\ZZ\cup
\{-\infty\}$ satisfying the following properties.  For $a,b\in\K$:
\begin{enumerate}
\item[(i)] If $a\neq 0$ then $\deg a\in\ZZ$, and $\deg 0=-\infty$;
\item[(ii)] $\deg (a+b)\leq \max\{\deg a,\deg b\}$;
\item[(iii)] $\deg(ab)=\deg a+\deg b$;
\item[(iv)] If $a\neq 0$ then $\deg(a^{-1})=-\deg a$.
\end{enumerate}

As a simple commutative example, if $\K=\F(y)$ for some field $\F$ and commuting
indeterminate $y$, for any $a=a_N/a_D$ with polynomials $a_N,
a_D\in\F[y]$ ($a_D\neq 0$), we
can define $\deg a=\deg a_N-\deg a_D$.  

More properly, our degree function is a \emph{non-archimedean
  valuation} on $\K$. Since our main application will be to
non-commutative Ore polynomial rings, where degrees are a natural and
traditional notion, we will adhere to the nomenclature of degrees.  \emph{We
note, however, that the degrees as defined here may become negative.}
See Lemma \ref{lem:Oredeggood} for the effective application to the
Ore polynomial case.

\subsection{Quasideterminants and degree bounds}

Following \cite{GelRet91,GelRet92}, we define the
\emph{quasideterminant} as a collection of $n^2$ functions from
$\K^\nxn\to\K\cup \{\undef\}$, where $\undef$ represents the function
being \emph{undefined}.  Let $A\in\K^\nxn$ and $p,q\in\{1,\ldots,
n\}$.  Assume $A_{pq}\in\K$ is the $(p,q)$ entry of $A$,
and let $A^{(pq)}\in\K^{(n-1)\times (n-1)}$ be the matrix $A$ with the
$p$th row and $q$th column removed.  Define the
$(p,q)$-quasideterminant of $A$ as
\[
|A|_{pq}=A_{pq}- \sum_{i\neq p, j\neq q} A_{pi} (|A^{(pq)}|_{ji})^{-1} A_{jq},
\]
where the sum is taken over all summands where $|A^{(pq)}|_{ji}$ is
defined.  If all summands have $|A^{(pq)}|_{ji}$ undefined then
$|A|_{pq}$ is undefined (and has value $\undef$).  See
\citep{GelRet92}.

\begin{fact}[\cite{GelRet91}, Theorem  1.6]
  \label{ft:qdetinv}
  Let $A\in\K^\nxn$ over a (possibly skew) field $\K$.
  \begin{itemize}
  \item[(1)] The inverse matrix $B=A^{-1}\in\K^\nxn$ exists if and only if the
    following are true:
  \begin{itemize}
    \item[(a)] If the quasideterminant $|A|_{ij}$ is defined then
      $|A|_{ij}\neq 0$, for all $i,j\in\{1,\ldots,n\}$;
    \item[(b)] For all $p\in\{1,\ldots,n\}$ there exists a
    $q\in\{1,\ldots,n\}$, such that the quasideterminant $|A|_{pq}$
    is defined;
    \item[(c)] For all $q\in\{1,\ldots,n\}$ there exists a
    $p\in\{1,\ldots,n\}$ such that the quasideterminant $|A|_{pq}$
    is defined;
    \end{itemize}
  \item[(2)] If the inverse $B$ exists, then for
    $i,j\in\{1,\ldots,n\}$ we have
      \[
      B_{ji}=\begin{cases}
        (|A|_{ij})^{-1} & \mbox{if $|A|_{ij}$ is defined,}\\
        0 & \mbox{if $|A|_{ij}$ is not defined.}
      \end{cases}
      \]
    \end{itemize}
\end{fact}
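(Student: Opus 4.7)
My plan is to argue by induction on $n$, using the classical block Schur-complement identity as the bridge between the recursive definition of $|A|_{pq}$ and the recursive structure of $A^{-1}$.  The base case $n=1$ is immediate: $|A|_{11}=A_{11}$, invertibility over $\K$ amounts to $A_{11}\neq 0$, and then $B_{11}=A_{11}^{-1}$, verifying both parts of the statement.

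For the inductive step, fix indices $p,q$; permuting rows and columns (which preserves invertibility and simply relabels the quasideterminants) reduces the problem to $p=q=1$.  Partition
\[
A=\begin{pmatrix} a & r\\ c & D\end{pmatrix},\qquad a=A_{11},\quad D=A^{(11)}.
\]
Suppose first that $D$ is invertible.  Left-multiplying $A$ by $\bigl(\begin{smallmatrix}I & -rD^{-1}\\ 0 & I\end{smallmatrix}\bigr)$ and right-multiplying by $\bigl(\begin{smallmatrix}I & 0\\ -D^{-1}c & I\end{smallmatrix}\bigr)$ block-diagonalizes $A$ to $\diag(s,D)$ with Schur complement $s=a-rD^{-1}c$, so $A$ is invertible iff $s\neq 0$, and in that case $(A^{-1})_{11}=s^{-1}$.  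Applying the induction hypothesis to $D$ gives $(D^{-1})_{kl}=(|A^{(11)}|_{lk})^{-1}$ when defined and $0$ otherwise; expanding $rD^{-1}c$ componentwise then reproduces verbatim the defining sum for $|A|_{11}$, so $s=|A|_{11}$.  This yields the ``if'' direction of (1) at $(1,1)$ together with formula (2); reversing the implications gives the ``only if'' direction and verifies conditions (a), (b), (c) via the induction hypothesis applied to the invertible $D$.

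The genuinely delicate case is when $D$ is not invertible yet $|A|_{11}$ is still defined (because only some of the $|A^{(11)}|_{ji}$ are defined).  My plan here is twofold.  First, prove a well-definedness lemma: different admissible ways of evaluating $|A|_{pq}$---corresponding to different choices of defined sub-quasideterminants---produce the same value.  I would establish this by directly comparing each partial sum with the $(q,p)$-entry of the identity $AB=I$, showing that every admissible evaluation must collapse to the common value $B_{qp}^{-1}$.  Second, show that invertibility of $A$ always lets one pass, for each row index $p$, to a column $q$ such that the associated $(n-1)\times(n-1)$ submatrix is itself invertible, so that the Schur analysis above applies and conditions (b) and (c) are met.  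The main obstacle, and the technical heart of the argument, is this well-definedness lemma; once it is in hand, the ``if and only if'' of (1) and the inversion formula (2) both follow by combining it with the invertible-$D$ Schur analysis.
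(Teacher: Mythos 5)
First, a point of comparison: the paper does not prove this statement at all --- it is imported as a Fact, cited to Gel'fand and Retakh (1991), Theorem 1.6. There is therefore no internal proof to measure yours against, and I am judging your sketch on its own terms. The generic half of your argument is sound: for $p=q=1$ with $D=A^{(11)}$ invertible, the block factorization $A=\bigl(\begin{smallmatrix}1 & rD^{-1}\\ 0 & I\end{smallmatrix}\bigr)\diag(s,D)\bigl(\begin{smallmatrix}1 & 0\\ D^{-1}c & I\end{smallmatrix}\bigr)$ is valid over a skew field, $(A^{-1})_{11}=s^{-1}$, and the induction hypothesis applied to $D$ does identify $rD^{-1}c$ with the defining sum for $|A|_{11}$, since the undefined quasideterminants of $D$ correspond exactly to zero entries of $D^{-1}$. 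This is the standard ``heredity'' mechanism and it correctly delivers (b), (c), and formula (2) at positions whose complementary submatrix is invertible.

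The genuine gap is where you locate it, but your plan for closing it does not work as described, and it infects more of the statement than you concede. When $A^{(pq)}$ is singular yet some $|A^{(pq)}|_{ji}$ are defined, $|A|_{pq}$ is by definition the sum over exactly those defined terms --- there is no choice of ``admissible evaluation,'' so the lemma you need is not well-definedness but the identity $|A|_{pq}=(B_{qp})^{-1}$ at these degenerate positions. Your proposed method, ``comparing each partial sum with the $(q,p)$-entry of $AB=I$,'' has no inductive input to compare against: the induction hypothesis describes the entries of $(A^{(pq)})^{-1}$, which does not exist here, so the quantities $(|A^{(pq)}|_{ji})^{-1}$ appearing in the partial sum are not entries of any inverse matrix and the Schur calculus gives you no handle on them. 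Without this identity you cannot establish condition (a) (which quantifies over \emph{all} defined $|A|_{ij}$, including those with singular complementary minor), nor formula (2) at those positions, nor the converse direction of (1). Closing it requires the row/column expansion and homological relations among the $|A|_{pq}$ for varying $(p,q)$ that Gel'fand and Retakh develop for precisely this purpose; your sketch builds none of that machinery, so as it stands it proves the theorem only under the extra hypothesis that every relevant complementary submatrix is invertible.
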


Over a commutative field $\K$, where $A\in\K^\nxn$ has inverse $B$,
the quasideterminants behave like a classical adjoint:
$|A|_{ij}=(-1)^{i+j}\det A/\det A^{(ij)}=1/B_{ji}$.  If $B_{ji}$ is zero
then $|A|_{ij}$ is undefined.

We now bound the size of the quasideterminants in terms of the size of
the entries of $A$.  Assume that $\K$ has a degree function as above.

\begin{thm}
  \label{thm:qdetbd}
  Let $A\in\K^\nxn$, such that either $A_{ij}=0$ or $0\leq \deg
  A_{ij}\leq d$ for all $i,j\in\{1,\ldots,n\}$.  For all $p,q\in\{1,\ldots,n\}$
  such that $|A|_{pq}$ is defined we have $-(n-1) d\leq
  \deg |A|_{pq} \leq nd$.
\end{thm}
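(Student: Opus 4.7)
Both bounds will be proved by induction on $n$. The base case $n=1$ is immediate: $|A|_{11}=A_{11}$, whose degree lies in $[0,d]\subseteq [-(n-1)d,nd]$.

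For the upper bound at the inductive step, I would apply the recursive definition of $|A|_{pq}$ together with the inductive hypothesis applied to each $(n-1)\times(n-1)$ subquasideterminant $|A^{(pq)}|_{ji}$. By induction, $\deg|A^{(pq)}|_{ji}\geq -(n-2)d$, so its inverse satisfies $\deg(|A^{(pq)}|_{ji})^{-1}\leq (n-2)d$. Each summand $A_{pi}(|A^{(pq)}|_{ji})^{-1}A_{jq}$ then has degree at most $d+(n-2)d+d=nd$, and the ultrametric inequality $\deg(a+b)\leq\max(\deg a,\deg b)$, combined with $\deg A_{pq}\leq d$, yields $\deg|A|_{pq}\leq nd$.

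The lower bound is the real obstacle, since the ultrametric inequality does \emph{not} deliver a term-wise lower bound on sums: cancellation can make $\sum A_{pi}(|A^{(pq)}|_{ji})^{-1}A_{jq}$ have degree strictly less than every individual summand. My plan is to use the Schur complement interpretation of $|A|_{pq}$. After permuting rows and columns so that $(p,q)$ sits at position $(n,n)$, and assuming $A^{(pq)}$ is invertible, the block LDU factorization
\[
A=\begin{pmatrix} I & 0 \\ A_{p,\hat q}(A^{(pq)})^{-1} & 1 \end{pmatrix}\begin{pmatrix} A^{(pq)} & 0 \\ 0 & |A|_{pq} \end{pmatrix}\begin{pmatrix} I & (A^{(pq)})^{-1}A_{\hat p,q} \\ 0 & 1 \end{pmatrix}
\]
together with multiplicativity of the Dieudonn\'e determinant $\Ddet$ in the abelianization $\K^{\times}/[\K^{\times},\K^{\times}]$ gives $\Ddet(A)=\Ddet(A^{(pq)})\cdot|A|_{pq}$. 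Since $\deg$ vanishes on commutators and thus descends to the abelianization, I obtain $\deg|A|_{pq}=\deg\Ddet(A)-\deg\Ddet(A^{(pq)})$. Combining this with the companion Dieudonn\'e bounds $\deg\Ddet(A)\geq 0$ (entries of $A$ have nonnegative degree, so a nonzero ``polynomial-like'' expression in them has nonnegative degree) and $\deg\Ddet(A^{(pq)})\leq(n-1)d$ (the noncommutative analog of the Leibniz bound) would yield $\deg|A|_{pq}\geq -(n-1)d$.

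The principal obstacle is rigorously establishing the two Dieudonn\'e determinant bounds in the noncommutative setting, especially the upper bound $\deg\Ddet(B)\leq md$ for $m\times m$ matrices $B$, since no Leibniz alternating sum is available to expand. A plausible route is a parallel induction combining the Schur complement identity above with the multiplicativity of $\Ddet$, or a direct reduction to upper-triangular form with careful tracking of pivot degrees. The edge case where $A^{(pq)}$ is itself singular (yet $|A|_{pq}$ remains defined in the recursive sense) should be reducible to the invertible case via an appropriate permutation whose existence follows from the recursive structure in Fact~\ref{ft:qdetinv}.
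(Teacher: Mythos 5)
Your upper bound is correct and coincides with the paper's argument: the inductive \emph{lower} bound on the $(n-1)\times(n-1)$ quasideterminants gives $\deg(|A^{(pq)}|_{ji})^{-1}\leq (n-2)d$, and the ultrametric inequality finishes. Your diagnosis of the lower bound is also sound: a termwise lower bound on the summands says nothing about their sum, and (for what it is worth) the paper's own one-line justification, $\deg|A|_{pq}\geq-\deg|A^{(pq)}|_{ji}$, is precisely such a termwise estimate.

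The gap is in your proposed repair. The identity $\deg|A|_{pq}=\deg\Ddet(A)-\deg\Ddet(A^{(pq)})$ is fine, and the bound $\deg\Ddet(A^{(pq)})\leq(n-1)d$ is available (it is the unnumbered theorem of Section~\ref{ssec:dieu}, proved by an independent induction). But the other ingredient, $\deg\Ddet(A)\geq 0$ whenever all entries have nonnegative degree, is not provable from the paper's hypotheses, and cannot be: the set $\{a\in\K:\deg a\geq 0\}$ is not closed under addition for a valuation-type degree, so cancellation can push the determinant's degree negative. Concretely, take $\K=\F(y)$ with $\deg(a_N/a_D)=\deg a_N-\deg a_D$ and
\[
A=\begin{pmatrix}1&1\\[2pt]1&1+y^{-1}\end{pmatrix}.
\]
All entries have degree $0$ (so $d=0$), yet $\Ddet(A)=y^{-1}$ has degree $-1$, and indeed $|A|_{11}=(y+1)^{-1}$ has degree $-1<-(n-1)d=0$. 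So your missing lemma is false at the stated level of generality --- and this example shows the theorem's lower bound itself needs more than axioms (i)--(iv): the entries must lie in a subring (such as $\F[y]$ or $\FD$), closed under the ring operations, on which every nonzero element has nonnegative degree, which is how the result is actually used in Corollary~\ref{cor:uniinvbd}. With such a hypothesis your Schur-complement route can be closed (in the commutative case $\det A$ is then a genuine polynomial of nonnegative degree; an analogous statement must be proved for $\PDdet$ over $\FD$), but as written the lower bound rests on an acknowledged, and in fact unclosable, gap.
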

\begin{proof*}
  We proceed by induction on $n$.

  For $n=1$, $p=q=1$ and $|A|_{11}=A_{11}$, so clearly the property
  holds.  Assume the statement is true for dimension $n-1$.  Then
  \[
  \deg |A|_{pq} 
   = \deg\left(
    A_{pq}
    - \sum_{i\neq p, j\neq q} A_{pi}
    (|A^{(pq)}|_{ji})^{-1}  A_{jq}\right),
  \]
  where the sum is over all defined summands.  Then using the
  inductive hypothesis we have
  \begin{align*}
  \deg |A|_{pq}
     & \leq \max\left \{ \deg A_{pq},  \max_{i\neq p, j\neq q} \left\{\deg A_{pi} -
      \deg  |A^{(pq)}|_{ji} + \deg A_{jq}\right\}
  \right\}\\
   & \leq 2d+ (n-2)d  \leq nd,
  \end{align*}
  and
  \[
  \deg |A|_{pq}
   \geq - \deg  |A^{(pq)}|_{ji}\geq -(n-1)d. \qed
  \]
\end{proof*}

\begin{cor}
  \label{cor:uniinvbd}
  Let $A\in\K^\nxn$ be unimodular, and $B\in\K^\nxn$
  such that $AB=I$.   Assume $A_{ij}=0$ or $0\leq \deg A_{ij}\leq d$
  for all $i,j\in\{1,\ldots,n\}$.  Then $\deg B \leq (n-1)d$.
\end{cor}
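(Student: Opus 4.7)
The plan is to combine Fact \ref{ft:qdetinv}, which expresses the entries of the inverse $B$ in terms of quasideterminants of $A$, with the lower bound on quasideterminants established in Theorem \ref{thm:qdetbd}. The statement is really about bounding $\deg B_{ji}$ for every $i,j$, so I would interpret ``$\deg B \leq (n-1)d$'' as this uniform entrywise bound.

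First I would apply Fact \ref{ft:qdetinv}(2): for each $i,j \in \{1,\ldots,n\}$, either $|A|_{ij}$ is defined and $B_{ji} = (|A|_{ij})^{-1}$, or $|A|_{ij}$ is undefined and $B_{ji} = 0$. In the latter case $\deg B_{ji} = -\infty$, which is trivially at most $(n-1)d$. In the former case, property (iv) of the degree function gives
\[
\deg B_{ji} \;=\; -\deg |A|_{ij}.
\]

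Next I would invoke the lower bound half of Theorem \ref{thm:qdetbd}: whenever $|A|_{ij}$ is defined, $\deg |A|_{ij} \geq -(n-1)d$. Substituting yields $\deg B_{ji} \leq (n-1)d$ for every pair $(i,j)$, which is the claim.

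There is no real obstacle here; the corollary is essentially a one-line consequence of the two preceding results. The only mild subtlety is making sure one separates the ``undefined'' case from the defined case before applying the degree identity, since (iv) presupposes a nonzero element, but this is handled cleanly by the fact that undefined quasideterminants correspond precisely to zero entries of $B$.
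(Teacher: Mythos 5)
Your proof is correct and follows essentially the same route as the paper: apply Fact \ref{ft:qdetinv}(2) to write $B_{ji}=(|A|_{ij})^{-1}$ (or $0$), then use the lower bound $\deg|A|_{ij}\geq -(n-1)d$ from Theorem \ref{thm:qdetbd}. The paper's version additionally remarks that each nonzero $B_{ij}$ has nonnegative degree by unimodularity, but that observation is not needed for the stated upper bound.
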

\begin{proof}
  From  Fact \ref{ft:qdetinv} we know that $B_{ji}=(|A|_{ij})^{-1}$
  when $|A|_{ij}$ is defined (and $B_{ji}=0$ otherwise).  Thus
  $\deg B_{ji}=-\deg|A|_{ij} \leq (n-1)d$, and $B_{ij}=0$ or
  $\deg B_{ij}\geq 0$ since $A$ is unimodular.
\end{proof}

\subsection{Dieudonn\'e Determinants}
\label{ssec:dieu}

Let $[\K^*,\K^*]$ be the \emph{commutator subgroup} of the
multiplicative group $\K^*$ of $\K$, the (normal) subgroup of
$\K^*$ generated by all pairs of elements of the form $a^{-1}b^{-1}ab$
for $a,b\in\K^*$.  Thus $\K^*/[\K^*,\K^*]$ is a commutative
group.  

Let $A\in\K^{n\times n}$ be a matrix with a right inverse.  The
\emph{Bruhat Normal Form} of $A$ is a decomposition $A=TDPV$, where
$P\in\K^\nxn$ is a permutation matrix inducing the permutation
$\sigma:\{1,\ldots,n\}\to\{1,\ldots,n\}$, and $T,D,V\in\K^\nxn$ are
\[
T=
\begin{pmatrix}
  1 & * & \cdots & * \\
  0 & 1 & \cdots & *  \\
  \vdots &  & \ddots & \vdots\\
  0 & \cdots & 0 & 1
\end{pmatrix},
~~~~~
D = \diag(u_1,\ldots,u_n),
~~~~~
V=
\begin{pmatrix}
  1 & 0 & \cdots & 0 \\
  * & 1 & \cdots & 0  \\
  \vdots &  & \ddots & \vdots\\
 * & \cdots & * & 1
\end{pmatrix}.
\]
See \citep[Chapter 19]{Dra83} for more details.
The Bruhat decomposition arises from Gaussian elimination, much as 
the $LUP$ decomposition does in the commutative case.
We then define $\PDdet(A)=\sign(\sigma)\cdot u_1\cdots u_n\in\K$ (sometimes
called the \emph{pre-determinant} of $A$).    Let $\pi$ be the canonical
projection from $\K^*\to\K/[\K^*,\K^*]$.  Then the Dieudonn\'e
determinant is defined as $\Ddet(A)=\pi(\PDdet(A))\in\K/[\K^*,\K^*]$, or
$\Ddet(A)=0$ if $A$ is not invertible.

The Dieudonn\'e determinant has a number of the desirable properties
of the usual determinant, as proven in \citep{Dieudonne:1943}:

\begin{enumerate}
\item $\Ddet(AB)=\Ddet(A)\Ddet(B)$ for any $A,B\in\K^\nxn$;
\item $\Ddet(P)=1$ for any permutation matrix;
\item $\Ddet \begin{pmatrix} A & C \\
                                            0 & B \end{pmatrix}
          = \Ddet(A)\Ddet(B)$.
\end{enumerate}

Also note that if $\K$ has a degree function as above, then 
$\deg(\Ddet(A))$ is well defined, since all elements of the
equivalence class of $\pi(\Ddet(A))$ have the same degree (since the
degree of all members of the commutator subgroup is zero).
\cite{GelRet91} show that 
\begin{align*}
\PDdet(A) & = |A|_{11} |A^{(11)}|_{22} |A^{(12,12)}|_{33}
|A^{(123,123)}|_{44} \cdots |A^{(1\ldots n-1,1\ldots,n-1)}|_{nn} \\
& = |A|_{11} \cdot \PDdet(A^{(11)}),
\end{align*}
when all these quasideterminants are defined (or equivalently $P$ is
the identity in the Bruhat decomposition above), where $A^{(1\ldots
  k,1\ldots k)}$ is the matrix $A$ with rows $1\ldots k$ and columns
$1\ldots k$ removed (keeping the original labelings of the remaining
rows and columns).

More generally, let $R=(r_1,\ldots,r_n)$, $C=(c_1,\ldots,c_n)$ be
permutations of $\{1,\ldots,n\}$, let $R_k=(r_1,\ldots,r_k)$,
$C_k=(c_1,\ldots,c_k)$, and define $A^{(R_k,C_k)}$ as the matrix $A$
with rows $r_1,\ldots,r_k$ and columns $c_1,\ldots,c_k$ removed
(where $A^{(R_0,C_0)}=A$).  Define
\begin{equation}
\label{eq:predetexp}
\begin{aligned}
\PDdet_{R,C}(A)
& = |A|_{r_1,c_1} |A^{(R_1,C_1)}|_{r_2,c_2} |A^{(R_2,C_2)}|_{r_3,c_3}
\cdots |A^{(R_{n-1},C_{n-1})}|_{r_n,c_n} \\
& = |A|_{r_1,c_1} \cdot \PDdet_{R,C}(A^{(r_1,c_1)}) \\
& = |A|_{r_1,c_1} \cdot |A^{(R_1,C_1)}|_{r_2,c_2} \cdot  \PDdet ( A^{(R_2,C_2)}).
\end{aligned}
\end{equation}

\begin{fact}[\cite{Gel05}, Section 3.1]
  Let $R,C$ be permutations of $\{1,\ldots,n\}$ and $R_k,C_k$ defined
  as above.    If $|A^{(R_k,C_k)}|_{r_{k+1},c_{k+1}}$ is defined for
    $k=0\ldots n-1$ , then
    \[
    \Ddet(A) = \sign(R)\cdot \sign(C) \cdot 
    \pi( \PDdet_{R,C}(A)).
    \]
\end{fact}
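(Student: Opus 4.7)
The plan is to reduce the general statement to the identity-permutation case $R=C=(1,\ldots,n)$, which is already covered by the paper's observation that $\PDdet(A)=|A|_{11}|A^{(11)}|_{22}\cdots|A^{(1\ldots n-1,\,1\ldots n-1)}|_{nn}$ whenever all these diagonal quasideterminants are defined. The reduction moves the chosen chain of pivots $(r_1,c_1),(r_2,c_2),\ldots,(r_n,c_n)$ onto the main diagonal via row and column permutations of $A$; the multiplicativity of $\Ddet$ will then absorb the two sign factors.

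Concretely, I would introduce permutation matrices $\Pi_R,\Pi_C\in\K^{\nxn}$ defined by $(\Pi_R)_{ik}=\delta_{k,r_i}$ and $(\Pi_C)_{kj}=\delta_{k,c_j}$, so that $B:=\Pi_R A\,\Pi_C$ satisfies $B_{ij}=A_{r_i,c_j}$. The key auxiliary fact to establish is the \emph{relabeling invariance}
\[
|B^{(1\ldots k,\,1\ldots k)}|_{k+1,\,k+1} \;=\; |A^{(R_k,C_k)}|_{r_{k+1},\,c_{k+1}}\qquad(0\le k\le n-1),
\]
together with the equivalence of definedness on the two sides. This follows by induction on the matrix dimension from the recursive definition $|M|_{pq}=M_{pq}-\sum_{i\ne p,\,j\ne q}M_{pi}(|M^{(pq)}|_{ji})^{-1}M_{jq}$, since that formula depends only on the entries of $M$ and on the pair of omitted indices, never on the labels of the remaining rows and columns. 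Under the hypothesis of the theorem, each right-hand side above is defined, hence each diagonal quasideterminant of $B$ is defined, and the identity-permutation formula applied to $B$ combined with (\ref{eq:predetexp}) gives $\PDdet(B)=\PDdet_{R,C}(A)$.

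To close, I would invoke multiplicativity of the Dieudonn\'e determinant to write $\Ddet(B)=\Ddet(\Pi_R)\Ddet(A)\Ddet(\Pi_C)$. Since $\Pi_R$ and $\Pi_C$ are themselves permutation matrices, the Bruhat decompositions $\Pi_R=I\cdot I\cdot \Pi_R\cdot I$ and $\Pi_C=I\cdot I\cdot \Pi_C\cdot I$ yield $\PDdet(\Pi_R)=\sign(R)$ and $\PDdet(\Pi_C)=\sign(C)$, so $\Ddet(\Pi_R)=\pi(\sign(R))$ and $\Ddet(\Pi_C)=\pi(\sign(C))$. Because $\sign(R),\sign(C)\in\{\pm1\}$ are central and self-inverse in $\K^*/[\K^*,\K^*]$, solving $\Ddet(\Pi_R)\Ddet(A)\Ddet(\Pi_C)=\pi(\PDdet_{R,C}(A))$ for $\Ddet(A)$ produces the asserted identity $\Ddet(A)=\sign(R)\cdot\sign(C)\cdot\pi(\PDdet_{R,C}(A))$.

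The main obstacle is the relabeling lemma above: one must carry the definedness condition through the induction simultaneously with the numerical equality, verifying that a summand $M_{pi}(|M^{(pq)}|_{ji})^{-1}M_{jq}$ is dropped in the expansion of $|M|_{pq}$ exactly when its image under the relabeling is dropped in the corresponding expansion of $|B|_{1,1}$ (and analogously for each deeper level of recursion). This is routine but slightly bookkeeping-heavy; once it is in hand, everything else is a direct application of properties (1)--(3) of $\Ddet$ listed in Section~\ref{ssec:dieu}.
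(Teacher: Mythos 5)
The paper offers no proof of this statement at all: it is recorded as a Fact and attributed to \cite{Gel05}, Section~3.1, so there is no in-paper argument to compare yours against line by line. Taken on its own terms, your reduction is sound and is essentially the standard way to derive the general-pivot formula from the diagonal one quoted just above it for $\PDdet(A)=|A|_{11}|A^{(11)}|_{22}\cdots$. The relabeling lemma you isolate is the real content, and your reason for it is the right one: the recursion defining $|M|_{pq}$ refers only to the entries of $M$ and the omitted pair $(p,q)$, never to the labels of the surviving rows and columns, so conjugating by $\Pi_R$ and $\Pi_C$ transports each $|A^{(R_k,C_k)}|_{r_{k+1},c_{k+1}}$ (together with its definedness) onto the corresponding diagonal quasideterminant of $B=\Pi_R A\Pi_C$; carrying definedness through the induction alongside the equality is, as you say, routine bookkeeping. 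One wrinkle deserves mention. Your step $\Ddet(\Pi_R)=\pi(\sign(R))$ is correct under the paper's Bruhat-based definition of $\PDdet$ (which builds $\sign(\sigma)$ into the predeterminant), but it directly contradicts the paper's listed property that $\Ddet(P)=1$ for every permutation matrix. Had you used that listed property, the factors $\sign(R)\cdot\sign(C)$ would never appear and the asserted identity could not be recovered; your proof implicitly, and correctly, treats the Bruhat definition as primary. Since $\sign(R),\sign(C)\in\{\pm1\}$ have degree zero, the discrepancy is invisible to every degree bound the paper actually uses, but it matters for the sign in the Fact itself, so a sentence fixing the convention would strengthen your write-up.
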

In other words, the Dieudonn\'e determinant is essentially invariant
of the order of the sequence of submatrices specified in
\eqref{eq:predetexp}.

\enlargethispage{4pt}

\begin{thm}
Let $A\in\K^\nxn$ be invertible, with $\deg A_{ij}\leq d$.  Then 
$\deg\Ddet(A)\leq nd$.
\end{thm}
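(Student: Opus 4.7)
The plan is to proceed by induction on $n$. The base case $n=1$ is immediate since $\Ddet(A)=A_{11}$ and $\deg A_{11}\leq d$ by hypothesis. For the inductive step, the strategy is to carry out one block step of Gaussian elimination and reduce to a Schur complement of size $n-1$.

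Concretely, since $A$ is invertible, column~$1$ contains a nonzero entry; I would select one of maximum degree in that column and swap rows so that it sits in position $(1,1)$. The row swap is a permutation matrix, and by the properties of $\Ddet$ recalled in Section~\ref{ssec:dieu} it contributes only a factor of $\pm 1$, which has degree $0$. After this swap we may assume $\deg A_{11}=\max_i \deg A_{i,1}$. Then I would invoke the block $LU$ factorization
\[
A \;=\; \begin{pmatrix} 1 & 0 \\ A_{2\ldots n,1}\,A_{11}^{-1} & I \end{pmatrix}\begin{pmatrix} A_{11} & A_{1,2\ldots n} \\ 0 & C \end{pmatrix},\qquad C \;=\; A^{(11)} - A_{2\ldots n,1}\,A_{11}^{-1}\,A_{1,2\ldots n},
\]
whose first factor is unipotent lower triangular (hence has Dieudonn\'e determinant~$1$) and whose second factor is block upper triangular with diagonal blocks $A_{11}$ and $C$. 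Since $A$ is invertible, so is $C$, and by multiplicativity and the block-triangular property of $\Ddet$ I get $\Ddet(A)=\pm\, A_{11}\cdot \Ddet(C)$, reducing the problem to a bound on $\deg \Ddet(C)$.

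The crux, and the reason the pivot choice matters, is the estimate $\deg C_{ij}\leq d$ for all $i,j\in\{2,\ldots,n\}$. From axioms~(ii)--(iv) of the degree function,
\[
\deg C_{ij}\;\leq\;\max\bigl(\deg A_{ij},\;\deg A_{i,1}-\deg A_{11}+\deg A_{1,j}\bigr)\;\leq\;\max(d,\,0+d)\;=\;d,
\]
since the column-maximum choice of the pivot gives $\deg A_{i,1}-\deg A_{11}\leq 0$. The induction hypothesis applied to the invertible $(n-1)\times(n-1)$ matrix $C$ then yields $\deg\Ddet(C)\leq(n-1)d$, and combining gives $\deg\Ddet(A)\leq d+(n-1)d=nd$.

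The main obstacle is exactly this Schur-complement degree bound. A careless pivot choice (say, always using $A_{11}$) can inflate $\deg C_{ij}$ by $-\deg A_{11}$, so iterating naively would yield only $\Theta(n^2 d)$; a direct expansion of $\PDdet_{R,C}$ via Theorem~\ref{thm:qdetbd} suffers from the same defect, because the quasideterminants appearing in \eqref{eq:predetexp} have sizes $n,n-1,\ldots,1$ and their individual degree bounds sum quadratically. Pivoting on a column-maximum entry is essentially the minimal amount of pivoting needed for the inductive accounting to close at the sharp bound $nd$.
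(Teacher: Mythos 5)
Your proof is correct, but it takes a genuinely different route from the paper's. The paper also inducts on $n$, but it works entirely inside the quasideterminant expansion \eqref{eq:predetexp} of the predeterminant: it chooses, at each level, a nonzero defined quasideterminant $|A^{(R_i,C_i)}|_{r_{i+1},c_{i+1}}$ of \emph{minimal} degree, and the telescoping inequality that closes the induction is that $|A^{(r_1,c_1)}|_{\ell k}^{-1}\,|A^{(r_1,c_1)}|_{r_2,c_2}$ has nonpositive degree by that minimality --- the analogue, one level up, of your pivot condition $\deg A_{i,1}-\deg A_{11}\leq 0$. You instead choose a \emph{maximal}-degree pivot entry in the first column and push the whole problem into a Schur complement $C$ whose entries still satisfy $\deg C_{ij}\leq d$, so the induction hypothesis applies to $C$ with the same $d$. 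Your version is more elementary and self-contained: it needs only the three listed multiplicative properties of $\Ddet$ from Section~\ref{ssec:dieu} together with the observation that a unipotent triangular matrix has trivial Bruhat form (hence $\Ddet=1$), and it avoids quasideterminants in the inductive step entirely; the paper's version stays closer to the quasideterminant machinery it has already built and reuses the $\PDdet_{R,C}$ formalism. Your closing remark is also on target: a naive application of Theorem~\ref{thm:qdetbd} to the factors of \eqref{eq:predetexp} only gives a bound summing to $\Theta(n^2d)$, which is exactly why the paper needs its minimal-degree selection, just as you need your column-maximal pivot. One small point of care, which you handle implicitly: since the theorem places no lower bound on $\deg A_{ij}$ (the degree is a valuation and may be negative), the pivot normalization really is essential and not just an optimization.
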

\begin{proof}
  We proceed by induction on $n$.  For $n=1$ this is clear.
  For $n=2$, the possible predeterminants are
  \begin{align*}
    \PDdet_{12,12}(A)=|A|_{11} A_{22} & = (A_{11}-A_{12}A_{22}^{-1}A_{21})A_{22},\\
    \PDdet_{12,21}(A)=|A|_{12} A_{21} & = (A_{12}-A_{11}A_{21}^{-1}A_{22})A_{21},\\
    \PDdet_{21,12}(A)=|A|_{21} A_{22} & = (A_{21}-A_{22}A_{12}^{-1}A_{11})A_{12},\\
    \PDdet_{21,21}(A)=|A|_{22} A_{11} & = (A_{22}-A_{21}A_{11}^{-1}A_{12})A_{11},
  \end{align*}
  at least one of which must be defined and non-zero, and all of which clearly have
  degree at most $2d$.

  Now assume the theorem is true for matrices of dimension less than
  $n$.  Choose $r_1,c_1\in\{1,\ldots,n\}$ such that $|A|_{r_1,c_1}$ is 
  non-zero and of minimal degree; that is
  $\deg |A|_{r_1,c_1} \leq \deg |A|_{k,\ell}$
  for all $k,\ell$ such that $|A|_{k,\ell}$ is defined and non-zero.
  The fact that $|A|_{r_1,c_1}\neq 0$ implies that
  $A^{(r_1,c_1)}$ is invertible, and we can continue this process
  recursively.  Thus, let $R=(r_1,\ldots,r_n)$ and
  $C=(c_1,\ldots,c_n)$ be permutations of $\{1,\ldots, n\}$ such that
  $|A^{(R_i,C_i)}|_{r_{i+1},c_{i+1}}\neq 0$ and $\deg
  |A^{(R_i,C_i)}|_{r_{i+1},c_{i+1}}$ is minimal over the degrees of
  non-zero, defined quasideterminants $|A^{(R_i,C_i)}|_{k,\ell}$, for $0\leq i<n$.
  Now
  \begin{align*}
   \PDdet_{R,C}(A)  & = |A|_{r_1,c_1} \cdot |A^{(r_1,c_1)}|_{r_2,c_2} \cdot
       \PDdet_{R,C}(A^{(R_2,C_2)}) \\
     & = \left(A_{{r_1,c_1}}-\sum_{k,\ell} A_{r_1k}|A^{(r_1,c_1)}|^{-1}_{\ell k}
       A_{\ell c_1}\right)\cdot |A^{(r_1,c_1)}|_{r_2,c_2}  \cdot
     \PDdet_{R,C}(A^{(R_2,C_2)})\\
    & = A_{r_1,c_1} \cdot |A^{(r_1,c_1)}|_{r_2,c_2}\cdot \PDdet_{R,C}(A^{(R_2,C_2)})\\
        & \hspace*{20pt}   - \sum_{k,\ell} A_{r_1k}|A^{(r_1,c_1)}|^{-1}_{\ell k}
     A_{\ell c_1} \cdot |A^{(r_1,c_1)}|_{r_2,c_2}  \cdot
     \PDdet_{R,C}(A^{(R_2,C_2)})\\
    & = A_{r_1,c_1} \cdot \PDdet_{R,C}(A^{(R_1,C_1)})\\
        & \hspace*{20pt}   - \sum_{k,\ell} A_{r_1k}|A^{(r_1,c_1)}|^{-1}_{\ell k}
     A_{\ell c_1} \cdot |A^{(r_1,c_1)}|_{r_2,c_2}  \cdot
     \PDdet_{R,C}(A^{(R_2,C_2)}),\\
  \end{align*}
where all sums are taken only over defined quasideterminants as above.
Thus
\[
  \deg\Ddet(A)  = \deg\PDdet_{R,C}(A) 
       \leq \max \left\{ d+(n-1)d, 2d+(n-2)d \right\} \leq nd,
\]
using the induction hypothesis and the assumption that 
$\deg |A^{(r_1,c_1)}|_{r_2,c_2}$ is chosen to be minimal.
\end{proof}

  \section{Degree bounds on matrices over $\FD$}
\label{sec:orebounds}


Some well-known properties of $\FD$ are worth recalling; see
\citep{Ore33} for the original theory or \citep{BroPet94} for an
algorithmic presentation.  Given $f,g\in\FD$, there is a degree
function (in $\D$) which satisfies the usual properties: $\degD
(fg)=\degD f+\degD g$ and $\degD(f+g)\leq \max\{\degD f,\degD g\}$.  We
set $\degD 0=-\infty$.

$\FD$ is a left (and right) principal ideal ring, which implies
the existence of a right (and left) division with remainder algorithm
such that there exists unique $q,r\in\FD$ such that $f=qg+r$ where
$\degD(r)<\degD(g)$.  This allows for a right (and left) Euclidean-like
algorithm which shows the existence of a greatest common right
divisor, $h=\gcrd(f,g)$, a polynomial of minimal degree (in $\D$) such
that $f=uh$ and $g=vh$ for $u,v\in\FD$.  The GCRD is unique up to a
left multiple in $\F\nonzero$, and there exist co-factors
$a,b\in\FD$ such that $af+bg=\gcrd(f,g)$.  There also exists a least
common left multiple $\lclm(f,g)$.  Analogously there exists a
greatest common left divisor, $\gcld(f,g)$, and least common right
multiple, $\lcrm(f,g)$, both of which are unique up to a right
multiple in $\F$. From \citep{Ore33} we also have that
\begin{equation}
\label{eq:oredeg}
\begin{aligned}
  \degD\lclm(f,g) &=\degD f + \degD g - \degD\gcrd(f,g),\\
  \degD\lcrm(f,g) & = \degD f + \degD g - \degD\gcld(f,g).
\end{aligned}
\end{equation}

It will be useful to work in the quotient skew field $\qFD$ of
$\FD$, and to extend the degree function $\degD$ appropriately.  We
first show that any element of $\qFD$ can be written as a
\emph{standard fraction} $fg^{-1}$, for $f,g\in\FD$ (and in
particular, since $\FD$ is non-commutative, we insist that $g^{-1}$ is
on the right).

\begin{fact}[\cite{Ore33}, Section 3]
  Every element of $\qFD$ can be written as a standard fraction.
\end{fact}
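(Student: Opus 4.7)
The plan is to prove this by showing that the set
\[
S=\{fg^{-1} : f\in\FD,\ g\in\FD\nonzero\}\subseteq\qFD
\]
is a sub-skew-field of $\qFD$ containing $\FD$, and then invoking the minimality of $\qFD$ (as the smallest skew field containing $\FD$) to conclude $S=\qFD$. The trivial inclusions handle half the structure: every $f\in\FD$ equals $f\cdot 1^{-1}\in S$, additive inverses are $(-f)g^{-1}\in S$, and the multiplicative inverse of a non-zero standard fraction is $(fg^{-1})^{-1}=gf^{-1}$, which is already in standard form (the roles of numerator and denominator simply swap, and $f\neq 0$ as required). Thus the whole proof reduces to closure of $S$ under addition and multiplication.

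The key mechanism is the right Ore condition in $\FD$, which I would derive directly from the existence of $\lcrm$ recalled in \eqref{eq:oredeg}. Namely, for any $a\in\FD$ and $b\in\FD\nonzero$, writing $m=\lcrm(a,b)=ab'=ba'$ produces elements $a',b'\in\FD$ with $b'\neq 0$ and $ab'=ba'$; rearranging gives the rewriting rule $b^{-1}a=a'(b')^{-1}$, which is precisely the move needed to transport an inverse from the left of an Ore polynomial to its right.

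With this tool in hand I would verify the two remaining closure properties. For addition, given $\alpha=f_1g_1^{-1}$ and $\beta=f_2g_2^{-1}$, apply $\lcrm$ to the denominators to obtain $u_1,u_2\in\FD\nonzero$ with $g_1u_1=g_2u_2=:m$; then $\alpha=f_1u_1 m^{-1}$, $\beta=f_2u_2 m^{-1}$, and
\[
\alpha+\beta=(f_1u_1+f_2u_2)\,m^{-1}\in S.
\]
For multiplication, use the Ore rewriting applied to $g_1$ and $f_2$: find $f_2',g_1'\in\FD$ with $g_1'\neq 0$ and $f_2 g_1'=g_1 f_2'$, so $g_1^{-1}f_2=f_2'(g_1')^{-1}$. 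Then
\[
\alpha\beta=f_1g_1^{-1}f_2g_2^{-1}=f_1f_2'(g_1')^{-1}g_2^{-1}=f_1f_2'\,(g_2g_1')^{-1}\in S.
\]

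The main (mild) obstacle is the bookkeeping on non-vanishing of denominators at each step: I must ensure $m$, $g_1'$, and $g_2g_1'$ are all non-zero. Each of these follows from $\FD$ being an integral domain together with the degree identities in \eqref{eq:oredeg}, which guarantee that the $\lcrm$ of two non-zero Ore polynomials is itself non-zero and that products of non-zero elements are non-zero. Once these routine checks are in place, $S$ is a skew subfield of $\qFD$ containing $\FD$, so $S=\qFD$ and every element of $\qFD$ admits a standard-fraction representation $fg^{-1}$.
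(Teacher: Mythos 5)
Your proof is correct and follows essentially the same route as the paper's: the paper likewise uses $\lcrm$ to rewrite $u^{-1}v$ as a standard fraction $fg^{-1}$ (closure under products) and $\lcrm$ of the denominators to form a common right denominator for sums. Your additional framing via minimality of $\qFD$ as the smallest skew field containing $\FD$ is just a cleaner packaging of the paper's implicit appeal to the fact that every element of $\qFD$ is built from sums and products of elements of $\FD$ and their inverses.
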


The notion of degree extends naturally to $\qFD$ as follows. 
\begin{defn}
  \label{def:deg}
  For $f,g\in\FD$, $g\neq 0$, the \emph{degree} $\degD
  (fg^{-1}) = \degD f-\degD g$.
\end{defn}



The proof of the next lemma is left to the reader.

\begin{lem}
  \label{lem:Oredeggood}
  For $f,g,u,v\in\FD$, with $g,v\neq 0$, we have the following:
  \begin{itemize}
  \item[(a)] if $fg^{-1}=uv^{-1}$ then $\degD(fg^{-1})=\degD(uv^{-1})$;
  \item[(b)] $\degD ( (fg^{-1}) \cdot (uv^{-1}))=\degD (fg^{-1})+\degD(uv^{-1})$;
  \item[(c)] $\degD(fg^{-1}+uv^{-1})\leq
    \max\{\degD(fg^{-1}),\degD(uv^{-1})\}$;
  \item[(d)]  $\degD ((fg^{-1})^{-1}) = -\degD (fg^{-1})$.
  \end{itemize}
\end{lem}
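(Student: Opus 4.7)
The plan is to reduce all four statements to the degree law $\degD(ab)=\degD a+\degD b$ on $\FD$ itself, combined with the Ore-theoretic fact that any two elements of $\FD$ admit common right and common left multiples, with the degree relation recorded in \eqref{eq:oredeg}. The single auxiliary gadget I need is: for any $g,u\in\FD$ with $g\neq 0$ there exist $\bar u,\bar g\in\FD$ with $\bar g\neq 0$ and $g\bar u=u\bar g$, so that $g^{-1}u=\bar u\bar g^{-1}$ in $\qFD$ and, taking $\degD$ of the identity $g\bar u=u\bar g$ inside $\FD$, we get $\degD g+\degD\bar u=\degD u+\degD\bar g$.

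Part (a) is simply the well-definedness of $\degD$ on standard fractions: if $fg^{-1}=uv^{-1}$, cross-multiplying gives $fv=ug$, and applying $\degD$ inside $\FD$ yields $\degD f+\degD v=\degD u+\degD g$, which rearranges to $\degD f-\degD g=\degD u-\degD v$. Part (d) follows immediately, since $(fg^{-1})^{-1}=gf^{-1}$ (valid because $f\neq 0$ whenever the inverse exists), whose degree is $\degD g-\degD f=-\degD(fg^{-1})$ by Definition \ref{def:deg}.

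For part (b), the Ore rewriting $g^{-1}u=\bar u\bar g^{-1}$ collapses the product into standard-fraction form,
\[
(fg^{-1})(uv^{-1}) = f\bar u\bar g^{-1}v^{-1} = (f\bar u)(v\bar g)^{-1},
\]
which by Definition \ref{def:deg} has degree $(\degD f+\degD\bar u)-(\degD v+\degD\bar g)$; substituting the relation $\degD\bar u-\degD\bar g=\degD u-\degD g$ gives $\degD(fg^{-1})+\degD(uv^{-1})$, as required. For part (c), I place the two summands over a common right denominator by choosing $\bar v,\bar g\in\FD\setminus\{0\}$ with $g\bar v=v\bar g=:w$ (one may take $w=\lcrm(g,v)$), so that
\[
fg^{-1}+uv^{-1} = (f\bar v)w^{-1}+(u\bar g)w^{-1} = (f\bar v+u\bar g)\,w^{-1}.
\]
Subadditivity of $\degD$ on $\FD$ together with $\degD w=\degD g+\degD\bar v=\degD v+\degD\bar g$ then gives
\[
\degD(fg^{-1}+uv^{-1}) \leq \max\{\degD f+\degD\bar v,\ \degD u+\degD\bar g\}-\degD w = \max\{\degD(fg^{-1}),\degD(uv^{-1})\}.
\]

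The only step with any content beyond direct bookkeeping is the choice of the auxiliary multiples $\bar u,\bar g,\bar v$ with the required nonvanishing and the matching degree relation, but these are exactly furnished by the standard $\lclm$/$\lcrm$ constructions in $\FD$ together with \eqref{eq:oredeg}, so I anticipate no genuine obstacle. The entire lemma is really a formal consequence of the well-behavedness of $\degD$ on $\FD$ and the Ore condition, and the proof can be left to the reader as the authors suggest.
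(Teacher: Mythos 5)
Your overall strategy---rewrite everything as standard fractions via the Ore condition and $\lcrm$, then invoke multiplicativity and subadditivity of $\degD$ on $\FD$ itself---is exactly the intended argument (it is the same computation the authors use to show that sums and products of standard fractions are again standard fractions), and your treatments of parts (b), (c) and (d) are correct as written.

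There is, however, a genuine error in part (a): in a noncommutative ring, $fg^{-1}=uv^{-1}$ does \emph{not} imply $fv=ug$; that ``cross-multiplication'' silently commutes $g^{-1}$ past $v$. Concretely, in $\QQ(z)[\D;\diff]$ take $f=1$, $g=\D$, $u=z$, $v=\D z=z\D+1$. Then $(\D z)^{-1}=z^{-1}\D^{-1}$, so $uv^{-1}=z\cdot z^{-1}\D^{-1}=\D^{-1}=fg^{-1}$, yet $fv=z\D+1\neq z\D=ug$. The degree identity $\degD f+\degD v=\degD u+\degD g$ does always hold, but proving it requires exactly the Ore gadget you set up in your opening paragraph: from $f\,(g^{-1}v)=u$ write $g^{-1}v=\bar{v}\,\bar{g}^{-1}$ with $g\bar{v}=v\bar{g}=\lcrm(g,v)$, so that $f\bar{v}=u\bar{g}$ holds \emph{inside} $\FD$; applying $\degD$ there and using $\degD\bar{v}=\degD\lcrm(g,v)-\degD g$ and $\degD\bar{g}=\degD\lcrm(g,v)-\degD v$ (from \eqref{eq:oredeg}) yields $\degD f-\degD g=\degD u-\degD v$. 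With that repair the whole lemma goes through as you describe.
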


In summary, the degree function on $\qFD$ meets the requirement of a
degree function on a skew field as in Section \ref{sec:detbounds}, and
is once again, actually a valuation on $\qFD$.

\subsection{Determinantal degree and unimodularity}

We show unimodular matrices are precisely those with a Dieudonn\'e
determinant of degree zero.

\begin{lem}
  \label{lem:unimod}
  Let $W\in\FD^{2\times 2}$ be as in Fact \ref{fact:2x2uni}.  Then $\degD\Ddet W=0$.
\end{lem}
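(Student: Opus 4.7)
The plan is to compute $\degD \Ddet W$ directly by evaluating the predeterminant $\PDdet_{12,12}(W)$ through its quasideterminantal expansion, and then to simplify the result using the two defining relations $ua + vb = g$ and $sa = -tb = \lclm(a,b)$ from Fact~\ref{fact:2x2uni} together with the Ore degree identity~\eqref{eq:oredeg}.

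First I would dispose of the degenerate case in which one of $a, b$ is zero. If $b = 0$, then $sa = -tb = 0$ forces $s = 0$ (since $a \neq 0$), while $ua = g$ with $g$ associate to $a$ forces $u$ to be a unit, and unimodularity of $W$ then forces $t \in \F\nonzero$. Thus $W$ is upper triangular with diagonal entries in $\F\nonzero$, and $\degD \Ddet W = 0$ by the multiplicative property of $\Ddet$. The case $a = 0$ is symmetric.

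For the main case in which both $a$ and $b$ are nonzero, I would observe that $tb = -\lclm(a,b) \neq 0$, whence $t \neq 0$ and the quasideterminant $|W|_{11} = u - v t^{-1} s$ is defined. The key calculation is to evaluate $|W|_{11}\cdot a$ using $sa = -tb$:
\[
|W|_{11}\cdot a = (u - v t^{-1} s)\,a = ua - v t^{-1}(sa) = ua - v t^{-1}(-tb) = ua + vb = g.
\]
Hence, viewed in $\qFD$, $|W|_{11} = g a^{-1}$; so $|W|_{11}\neq 0$ and $\degD |W|_{11} = \degD g - \degD a$ by Definition~\ref{def:deg}. By the fact stated immediately following~\eqref{eq:predetexp}, $\PDdet_{12,12}(W) = |W|_{11}\cdot t$ is therefore a valid representative of $\Ddet W$, so
\[
\degD \Ddet W = \degD |W|_{11} + \degD t.
\]

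Finally, applying~\eqref{eq:oredeg} to $tb = -\lclm(a,b)$ yields
\[
\degD t = \degD \lclm(a,b) - \degD b = (\degD a + \degD b - \degD g) - \degD b = \degD a - \degD g,
\]
and summing the two contributions gives $\degD \Ddet W = (\degD g - \degD a) + (\degD a - \degD g) = 0$, as required. I expect no significant obstacle beyond verifying that $|W|_{11}$ is defined and nonzero, which the identity $|W|_{11}\cdot a = g \neq 0$ settles automatically.
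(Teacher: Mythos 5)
Your proof is correct. It takes a somewhat different route from the paper's, though the two arguments share the same skeleton. The paper first normalizes to $\gcrd(a,b)=1$ (replacing $(a,b)$ by $(ag^{-1},bg^{-1})$), then multiplies $W$ on the right by the auxiliary matrix $\left(\begin{smallmatrix}a&0\\ b&1\end{smallmatrix}\right)$ to produce the upper triangular matrix $\left(\begin{smallmatrix}1&v\\ 0&t\end{smallmatrix}\right)$; multiplicativity of $\Ddet$ then gives $\Ddet(W)\cdot a\equiv t\bmod[\FD^*,\FD^*]$, and \eqref{eq:oredeg} finishes as in your last step. You instead evaluate $\PDdet_{12,12}(W)=|W|_{11}\cdot t$ directly, and your identity $|W|_{11}\cdot a=g$ is precisely the quasideterminant counterpart of the paper's matrix identity (indeed, with $g=1$ the two yield the same relation $\Ddet(W)\cdot a\equiv t$). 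Both proofs ultimately rest on the same computation $\degD t=\degD a-\degD\gcrd(a,b)$ from \eqref{eq:oredeg}. What your route buys is that it avoids the normalization step and the appeal to multiplicativity of $\Ddet$, and it handles the degenerate case $ab=0$ explicitly rather than dismissing it as trivial; what it costs is an appeal to the Gel'fand--Retakh expansion of the predeterminant and a detour through $\qFD$, whereas the paper's argument stays inside $\FD$ until the final degree count. Either version is acceptable.
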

\begin{proof} 
  We may assume that $\gcrd(a,b)=g=1$, since the same matrix satisfies\linebreak
  $W(ag^{-1},bg^{-1})^T=(1,0)^T$.  Also assume both $a,b\neq 0$
  (otherwise the lemma is trivial).    Then
    \[
    \begin{pmatrix}
      u & v \\
      s & t
    \end{pmatrix}
    \begin{pmatrix}
      a & 0 \\
      b & 1
    \end{pmatrix}
    =
    \begin{pmatrix}
      1 & v\\
      0 & t
    \end{pmatrix},
    ~~~~\mbox{and}~~
    \Ddet(W) \cdot a \equiv t\bmod [\FD^*,\FD^*],
    \]
    so $\degD\Ddet W + \degD a = \degD t$.
    Since $\gcrd(a,b)=1$, from \eqref{eq:oredeg} we know $\degD a=\degD t$,
    so $\degD\Ddet W=0$.
\end{proof} 

Embedding the $2\times 2$ matrices into $n\times n$ identity matrices,
as in Lemma \ref{lem:nxnuni}, we obtain the following (the proof of
which is left to the reader).

\begin{cor}
  \label{cor:Eunimod}
  Let $E\in\FD^{n\times n}$ be as in Lemma \ref{lem:nxnuni}.  The
  $\degD\Ddet E = 0$.
\end{cor}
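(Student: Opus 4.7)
The plan is to reduce the $n\times n$ case to the $2\times 2$ case already treated by Lemma \ref{lem:unimod}. The key observation is that the matrix $E = E(i,j;w)$ from Lemma \ref{lem:nxnuni} differs from the identity only in the four entries $(i,i), (i,j), (j,i), (j,j)$, which carry the entries of the $2\times 2$ unimodular matrix $W$ from Fact \ref{fact:2x2uni}. So morally $E$ ``is'' $W$ padded with identity, and its Dieudonn\'e determinant should coincide with that of $W$.

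First I would dispose of the two trivial cases separately. If $w_i = w_j = 0$ then $E = I$ and $\Ddet E = 1$, so $\degD\Ddet E = 0$. If $w_i = 0$ and $w_j \neq 0$, then $E$ is a transposition permutation matrix, and $\Ddet E = 1$ by property~2 of the Dieudonn\'e determinant listed in Section \ref{ssec:dieu}, hence again of degree $0$.

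For the remaining case, with both $w_i, w_j$ nonzero, let $P\in\FD^{n\times n}$ be the permutation matrix that simultaneously moves row $i$ to row $1$ and row $j$ to row $2$. Then $PEP^T$ is block diagonal of the form $\operatorname{diag}(W, I_{n-2})$. Applying the multiplicativity (property~1), the triviality on permutations (property~2), and the block-triangular property (property~3) of $\Ddet$, we get
\[
\Ddet(E) \;=\; \Ddet(P)\,\Ddet(E)\,\Ddet(P^T) \;=\; \Ddet(PEP^T) \;=\; \Ddet(W)\cdot\Ddet(I_{n-2}) \;=\; \Ddet(W)
\]
in the abelian group $\K^*/[\K^*,\K^*]$. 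Since the degree is a well-defined function on this quotient (as noted in Section \ref{ssec:dieu}), we conclude $\degD\Ddet(E) = \degD\Ddet(W) = 0$ by Lemma \ref{lem:unimod}.

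I expect no real obstacle here: all three properties of $\Ddet$ needed are stated explicitly in Section \ref{ssec:dieu}, and the similarity by $P$ is the natural bookkeeping device for embedding the $2\times 2$ block into the $n\times n$ setting. The only mild care required is to write the final chain of equalities as identities in the abelianized quotient, since $\Ddet$ takes values there and not in $\K$; this is handled automatically by property~1.
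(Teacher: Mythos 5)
Your proof is correct and follows exactly the route the paper intends: the paper leaves this corollary to the reader with the hint that one should view $E$ as the $2\times 2$ matrix $W$ embedded in an identity, and your conjugation by a permutation plus the three listed properties of $\Ddet$ is precisely the bookkeeping that reduces it to Lemma \ref{lem:unimod}. No gaps worth noting.
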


The characterization of unimodular matrices as those with Dieudonn\'e
determinant of degree zero follows by looking at the Hermite form of a
unimodular matrix.

\begin{thm}
   $U\in\FD^\nxn$ is unimodular if and only if
   $\degD\Ddet U=0$.
\end{thm}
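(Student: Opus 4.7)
The plan is to reduce both directions to an analysis of the Hermite form of $U$, leveraging multiplicativity of the Dieudonn\'e determinant together with the fact, supplied by Corollary \ref{cor:Eunimod} and Lemma \ref{lem:unired}, that every elementary transformation used in the construction of the Hermite form has Dieudonn\'e determinant of degree zero.

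For the forward direction, I would assume $U$ is unimodular and observe that $U^{-1}U = I$ with $I$ already in Hermite form; Theorem \ref{thm:hermuniq} then forces the Hermite form of $U$ to be $I$ itself. Running the algorithm from the proof of Theorem \ref{thm:sqherm} on $U$ therefore produces a unimodular $W$ with $WU = I$, where $W$ is a product of matrices of the type $E(i,j;w)$ from Lemma \ref{lem:nxnuni}, permutation matrices, and unit upper-triangular reduction matrices from Lemma \ref{lem:unired}. Each of these factors has Dieudonn\'e determinant of degree zero: the $E(i,j;w)$ by Corollary \ref{cor:Eunimod}, the permutation matrices because $\Ddet(P) = \pm 1$, and the unit upper-triangular matrices because their diagonal product equals $1$. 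Multiplicativity gives $\degD\Ddet W = 0$, so $\degD\Ddet U = \degD\Ddet I - \degD\Ddet W = 0$.

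For the reverse direction, I would assume $\degD\Ddet U = 0$. Then $\Ddet U \ne 0$, so $U$ is invertible over $\qFD$ and has full rank over $\FD$. Applying Theorem \ref{thm:sqherm} produces a unimodular $W$ with $WU = H$ in Hermite form, and by the same reasoning as above $\degD\Ddet W = 0$, whence $\degD\Ddet H = 0$. Since $H$ is upper triangular, its Dieudonn\'e determinant equals $\prod_i H_{ii}$ modulo $[\qFD^*,\qFD^*]$, so $\sum_i \degD H_{ii} = 0$. Each $H_{ii} \in \FD$ is non-zero (as $H$ has full rank), so $\degD H_{ii} \ge 0$, forcing $\degD H_{ii} = 0$ for every $i$. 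Hence each $H_{ii}$ lies in $\F^*$, and the monic condition (ii) of the Hermite form gives $H_{ii} = 1$. Condition (iv) then requires every superdiagonal entry to have degree strictly less than zero, i.e.\ to vanish. Thus $H = I$, and $U = W^{-1}$ is unimodular.

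The only real verification required is that every factor produced by the Hermite algorithm lies in the kernel of $\degD\circ\Ddet$, and this has already been isolated by the preceding lemmas; beyond that, the argument is essentially a telescoping of the multiplicativity of $\Ddet$ through the proof of Theorem \ref{thm:sqherm}, combined with the combinatorial rigidity of the Hermite form when the diagonal degrees sum to zero.
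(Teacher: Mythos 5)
Your proof is correct and follows essentially the same route as the paper: both directions reduce to tracking $\degD\Ddet$ through the elementary factors produced by the Hermite-form algorithm of Theorem \ref{thm:sqherm}, using Corollary \ref{cor:Eunimod} and the multiplicativity of $\Ddet$. The only difference is that you spell out the step the paper leaves terse --- that a Hermite form whose Dieudonn\'e determinant has degree zero must be the identity, via $\sum_i\degD H_{ii}=0$ with each $\degD H_{ii}\geq 0$ --- which is a welcome addition rather than a deviation.
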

\begin{proof}
  Suppose $U$ is unimodular.  The Hermite form of $U$ must be the
  identity: all the diagonal entries must be invertible in $\FD$ and
  the entries above the diagonal are reduced to $0$.  Thus, the
  unimodular multiplier to the Hermite form of $U$ will be the inverse
  $U$.

  Following the simple algorithm to compute the Hermite form in
  Theorem \ref{thm:sqherm}, we see it worked via a sequence of
  unimodular transforms, all of which were either permutations,
  off-diagonal reductions from Lemma \ref{lem:nxnuni}, or are of the form
  $E$ in Lemmas \ref{lem:nxnuni} and \ref{cor:Eunimod}.  The
  Dieudonn\'e determinants of permutations and reduction
  transformations are both equal to $1$, by the basic properties of Dieudonn\'e
  determinants discussed at the beginning of
  Section \ref{ssec:dieu}, and hence have degree $0$.  The
  Dieudonn\'e determinants of the transformations $E$ are of degree
  $0$ by Corollary \ref{cor:Eunimod}. The proof is now complete by the
  multiplicative property of Dieudonn\'e determinants, and the
  additive properties of their degrees.

  Assume conversely that $\degD\Ddet U=0$, and that $V\in\FD^\nxn$ is
  a unimodular matrix such that $VU=H$ is in Hermite form.  Then
  $\degD\Ddet V+\degD\Ddet U=\degD\Ddet H=0$.  But the only matrix in
  Hermite form with degree $0$ is the identity matrix.  Thus $V$ is
  the inverse of $U$, and $U$ must be unimodular.
\end{proof}
  
\subsection{Degree bounds on the Hermite form}

In this section we establish degree bounds on Hermite forms of
matrices over $\FD$ and their unimodular transformation matrices.

\begin{thm}
  \label{thm:Hdeg}
  Let $A\in\FD^\nxn$ have full rank and entries of degree at
  most $d$ and Hermite form $H\in\FD^\nxn$. Then 
  \begin{enumerate}
    \item[(a)] The sum of the degrees of the diagonal entries of $H$ has degree at
      most $nd$;
    \item[(b)] The sum of the degrees of the entries in any row of $H$ has
      degree at most $nd$.
    \end{enumerate}
\end{thm}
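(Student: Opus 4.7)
The plan is to leverage the Dieudonn\'e determinant machinery developed in Sections~\ref{sec:detbounds} and the first part of Section~\ref{sec:orebounds}. View the matrices as living in $\qFD^\nxn$ with the extended degree function of Lemma~\ref{lem:Oredeggood}, which satisfies the axioms of Section~\ref{sec:detbounds}; this unlocks all the bounds proved there for skew fields with a valuation.

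For part (a), I would argue as follows. From $H=UA$ and the multiplicativity of $\Ddet$, $\degD\Ddet H = \degD\Ddet U + \degD\Ddet A$, where degrees of elements in $\qFD^*/[\qFD^*,\qFD^*]$ are well-defined as noted in Section~\ref{ssec:dieu}. Since $U$ is unimodular, Theorem~4.7 (the characterization of unimodular matrices by $\degD\Ddet=0$) gives $\degD\Ddet U=0$. Since $A$ has entries with $\degD\leq d$, the bound $\deg\Ddet(A)\leq nd$ for matrices over a skew field with a degree function (the last theorem of Section~\ref{sec:detbounds}) gives $\degD\Ddet A\leq nd$. Finally, because $H$ is upper triangular, the block property $\Ddet\bigl(\begin{smallmatrix} A & C\\ 0 & B\end{smallmatrix}\bigr)=\Ddet(A)\Ddet(B)$ iterated on the triangular structure yields $\Ddet H \equiv H_{11}\cdots H_{nn}$ modulo the commutator subgroup, so $\degD\Ddet H=\sum_i \degD H_{ii}$. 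Chaining these equalities and inequalities proves (a).

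For part (b), I would use (a) together with condition (iv) of the Hermite form. For any row $i$, the entries $H_{ij}$ with $j<i$ are zero by upper triangularity (conditions (i) and (iii) for full-rank square $H$), the leading entry is $H_{ii}$, and for each $j>i$ the entry $H_{ij}$ sits above the leading entry $H_{jj}$ of row $j$, so $\degD H_{ij}<\degD H_{jj}$. Therefore the sum of degrees of the entries in row $i$ is bounded by
\[
\degD H_{ii}+\sum_{j>i}\degD H_{ij}\;\leq\;\sum_{j\geq i}\degD H_{jj}\;\leq\;\sum_{j=1}^n\degD H_{jj}\;\leq\;nd,
\]
where the last inequality is precisely part (a).

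The only real obstacle is conceptual rather than computational: making sure the Dieudonn\'e-determinant bound from Section~\ref{sec:detbounds}, which was stated for an abstract skew field $\K$ with a non-archimedean degree, transfers cleanly to $\qFD$ with $\degD$. This is handled by Lemma~\ref{lem:Oredeggood}, which verifies all four axioms. The rest is bookkeeping: multiplicativity of $\Ddet$, vanishing of $\degD\Ddet$ on unimodulars, and the strict inequality in condition (iv) of the Hermite form.
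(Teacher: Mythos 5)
Your proposal is correct and follows essentially the same route as the paper: part (a) via multiplicativity of the Dieudonn\'e determinant, the vanishing of $\degD\Ddet$ on the unimodular factor, the $nd$ bound on $\degD\Ddet(A)$, and the identification of $\degD\Ddet(H)$ with the sum of diagonal degrees; part (b) from condition (iv) of the Hermite form. The paper writes the factorization as $A=VH$ rather than $H=UA$, but since $V=U^{-1}$ is likewise unimodular this is an immaterial difference.
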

\begin{proof}
  Let $V\in\FD^\nxn$ be unimodular such that $A=VH$, whence
  $\Ddet(A)=\Ddet(V)\Ddet(H)$.  Therefore (a) follows from
  \[
  \degD\Ddet(A)=\degD\Ddet(H)=\sum_{1\leq i\leq n}\degD H_{ii}
  \leq nd.
  \]
  Point (b) follows from the fact that each entry above the diagonal
  in the Hermite form has, by definition, degree smaller than the
  degree of the diagonal entry below it.
\end{proof}

We now show that all entries in $H^{-1}$ have non-positive degrees.

\begin{lem}
  \label{lem:Hinvfrac}
  Let $H\in\FD^\nxn$ be of full rank and in Hermite form, and
  let $J=H^{-1}$.  Then $\degD J_{ij}\leq 0$ for $1\leq i,j\leq n$.
\end{lem}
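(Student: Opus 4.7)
The plan is to exploit the upper triangularity of $H$ together with the recursive structure of $J$ that falls out of $HJ = I$, closing the induction with the strict degree-reduction property~(iv) of the Hermite form.

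First, since $H$ is upper triangular with nonzero diagonal (full rank), $J = H^{-1}$ is also upper triangular, so the cases $i > j$ give $J_{ij} = 0$ and hence $\degD J_{ij} = -\infty$ trivially. On the diagonal, $J_{ii} = H_{ii}^{-1}$, and since $H_{ii} \neq 0$ has $\degD H_{ii} \geq 0$, we get $\degD J_{ii} \leq 0$ directly.

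For entries strictly above the diagonal, I would proceed by induction on $j - i$. Expanding the $(i,j)$ entry of $HJ = I$ and using upper triangularity on both sides gives, for $i < j$,
\[
J_{ij} = -H_{ii}^{-1} \sum_{k=i+1}^{j} H_{ik}\, J_{kj}.
\]
The naive induction hypothesis $\degD J_{kj} \leq 0$ is \emph{not} strong enough: it only yields $\degD J_{ij} \leq -\degD H_{ii} + \max_k \degD H_{ik}$, which can easily be positive. I would therefore strengthen the claim being proved by induction to
\[
\degD J_{kj} \leq -\degD H_{kk} \qquad \text{for all } k \leq j,
\]
which still implies the conclusion of the lemma since every $\degD H_{kk} \geq 0$.

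Under this stronger hypothesis the $k$-th summand has degree at most $\degD H_{ik} - \degD H_{kk}$, and this is strictly negative by property~(iv) of the Hermite form (since $i < k$, the entry $H_{ik}$ sits in the column above the leading element $H_{kk}$, hence has strictly smaller degree). Summing and applying Lemma~\ref{lem:Oredeggood}(c) then gives $\degD J_{ij} \leq -\degD H_{ii} + (\text{something} < 0) < -\degD H_{ii} \leq 0$, closing the induction. The main obstacle is precisely recognizing that the statement one wants is too weak to induct on, and that the correct strengthening is tailored so as to let the strict Hermite inequality $\degD H_{ik} < \degD H_{kk}$ absorb the unwanted contribution of $H_{ii}^{-1}$.
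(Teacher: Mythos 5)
Your proof is correct. It follows the same basic strategy as the paper---induction on the distance from the diagonal, using upper triangularity and the strict degree inequality of Hermite property (iv)---but you expand $HJ=I$ where the paper expands $JH=I$, and this choice is what forces your strengthening. In the paper's version the $(r,c)$ entry of $JH=I$ reads $J_{rc}H_{cc}=-\sum_{r\leq\ell<c}J_{r\ell}H_{\ell c}$, so the pivot $H_{cc}$ and all the coefficients $H_{\ell c}$ lie in the \emph{same column} $c$; property (iv) then gives $\degD H_{\ell c}<\degD H_{cc}$ directly, and the naive hypothesis $\degD J_{r\ell}\leq 0$ already closes the induction. In your version the coefficients $H_{ik}$ run across \emph{row} $i$ while the pivot is $H_{ii}$, and the Hermite form says nothing about $\degD H_{ik}$ relative to $\degD H_{ii}$; your fix---strengthening the inductive claim to $\degD J_{kj}\leq-\degD H_{kk}$ so that each $\degD H_{ik}$ is instead absorbed by $\degD H_{kk}$ in its own column---is exactly right and yields the slightly sharper conclusion $\degD J_{ij}\leq-\degD H_{ii}$ (indeed strict for $i<j$). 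The paper's choice of side buys a shorter argument; yours buys a marginally stronger bound. Both rely on the same facts about $\degD$ on the quotient skew field (Lemma \ref{lem:Oredeggood}).
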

\begin{proof}
  We consider the equation $JH=I$, and note that $J$, like $H$ is upper
  triangular. For each $r\in\{1,\ldots,n\}$ we show by induction on
  $c$ (for $r\leq c \leq n$) that $\degD J_{rc}\leq 0$.

  For the base case $c=r$, $J_{rr}H_{rr}=1$, so $\degD J_{rr}=-\degD
  H_{rr}\leq 0$.

  Assume now that $r<c$ and $\degD J_{r\ell}\leq 0$ for $r\leq \ell<c$.  We
  need to show that $\degD J_{rc}\leq 0$.  We know that
  \[
  \sum_{1\leq i\leq n} J_{r\ell}H_{\ell c}=\sum_{r\leq \ell\leq c} J_{r\ell} H_{\ell c} = 0.
  \]
  Since $\degD J_{r\ell}\leq 0$ for $r\leq\ell<c$ and $\degD H_{cc}>\degD
  H_{\ell c}$ for $r\leq\ell<c$, it must be the case that $\degD
  J_{rc}\leq 0$ as well.
\end{proof}

\begin{thm}
  \label{thm:HermDegV}
  Let $A\in \FD^\nxn$ be invertible (over $\qFD$), whose entries all
  have degree at most $d$ in $\D$.  Suppose $A$ has Hermite form $H\in
  \FD^\nxn$, with $UA=H$ and $UV=I$ for $U,V\in\FD^\nxn$.  Then $\degD
  V\leq d$ and $\degD U\leq (n-1)d$.
\end{thm}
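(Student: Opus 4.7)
The strategy is to derive the bound on $V$ first, via the factorization $V = AH^{-1}$, and then to bound $U = V^{-1}$ by applying the unimodular-inverse bound of Corollary~\ref{cor:uniinvbd} to $V$.

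From $UA=H$ and $UV=I$ we have $V=U^{-1}$ and hence $A = VH$, so $V = AH^{-1}$. Lemma~\ref{lem:Hinvfrac} yields $\degD (H^{-1})_{kj} \le 0$ for every $k,j$, while by hypothesis $\degD A_{ik} \le d$. Working in the quotient skew field $\qFD$ and using the valuation properties of $\degD$ recorded in Lemma~\ref{lem:Oredeggood}, the entrywise estimate
\[
\degD V_{ij} \;=\; \degD \sum_{k} A_{ik} (H^{-1})_{kj} \;\le\; \max_{k} \bigl( \degD A_{ik} + \degD (H^{-1})_{kj} \bigr) \;\le\; d
\]
follows at once. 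Since $U$ is unimodular, so is $V=U^{-1}$; in particular $V\in\FD^\nxn$, so the inequality $\degD V_{ij}\le d$ is a genuine polynomial-degree statement: every entry of $V$ is either zero or a polynomial of degree between $0$ and $d$ in $\D$.

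For the bound on $U$, observe that $V\in\FD^\nxn$ is now a unimodular matrix whose entries satisfy the hypothesis of Corollary~\ref{cor:uniinvbd} with parameter $d$, and that $U = V^{-1}$ in the matrix sense. Applying that corollary with $V$ in the role of $A$ and $U$ in the role of $B$ immediately gives $\degD U \le (n-1)d$, as claimed. The only delicate point in this argument is the transition from the valuation-theoretic bound on $\degD V_{ij}$ in $\qFD$ to a bound on genuine polynomial degree in $\FD$; this step relies crucially on recognising $V$ as the \emph{polynomial} (not merely rational) inverse of the unimodular $U$, which is built into the hypothesis.
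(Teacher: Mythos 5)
Your proof is correct and follows essentially the same route as the paper's: both derive $V=AH^{-1}$, invoke Lemma~\ref{lem:Hinvfrac} to conclude $\degD V\leq d$, and then apply Corollary~\ref{cor:uniinvbd} to the unimodular matrix $V$ to obtain $\degD U\leq (n-1)d$. You merely spell out the entrywise valuation estimate and the polynomiality of $V$ more explicitly than the paper's terse three-line argument.
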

\begin{proof}
  Note that $V=AH^{-1}$, and by Lemma \ref{lem:Hinvfrac} all entries
  in $H^{-1}$ have non-positive degree.  Thus $\degD V\leq \degD A$.  By
  Corollary \ref{cor:uniinvbd}, $\degD U\leq (n-1)d$.  
\end{proof}


  \section{Computing Hermite forms by linear systems over $\FD$}
\label{sec:compute}

In this section we present our polynomial-time algorithm to compute
the Hermite form of a matrix over $\FD$.  This generally follows the
``linear systems'' approach of \cite{KalKri87}, and more specifically
the refinements in \cite{Sto94} (for matrices over $\k[x]$ for a
field $\k$).  We will need the tools for $\FD$ we have
developed in the previous sections.  The method only directly
constructs the matrix $U$ such that $H=UA$. The Hermite form $H$ can
be found by performing the multiplication $UA$. 

The general approach is similar to that described in \cite{GieKim09},
with a primary difference that in that paper the technique of
\cite{KalKri87} was adapted. This new technique is considerably more
efficient (see below). As well, our earlier paper was
constrained to differential rings as the necessary
degree bounds were not available for all Ore polynomials.  

Assume that $A_{ij}=\sum_{0\leq k\leq d}A_{ijk}\D^k$ for
$A_{ijk}\in\F$. Let $\row(A,i)\in\FD^{1\times n}$ be the $i$th row of
$A$ and define
\[
\L(A) =\left\{\sum_{1\leq i\leq n} b_i \cdot \row(A,i): b_1,\ldots,b_n\in\FD
\right\},
\]
the left module of the row space of $A$.  The following lemma is
shown analogously to \cite[\S 4.3.1, Lemma 4]{Sto94}.
\begin{lem}
   \label{lem:latmem}
   Let $A\in\FD^\nxn$ be nonsingular, with Hermite form $H$.
   Let $h_i=\degD H_{ii}$ for $1\leq i\leq n$.  For
   $v=(0,\ldots,0,v_{\ell},\ldots,v_n)\in\FD^{1\times n}$, with $\degD
   v_\ell<h_\ell$,  then if $v\in\L(A)$ we have $v_\ell=0$, and if $v_\ell\neq 0$
   then $v\notin\L(A)$.
\end{lem}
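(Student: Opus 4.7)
The plan is to reduce the problem to the Hermite form $H$ itself by observing that $\L(A) = \L(H)$, then exploit upper triangularity of $H$ and the degree condition on the diagonal.

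First I would note that the two bulleted conclusions are contrapositives of each other, so it suffices to prove that if $v\in\L(A)$ then $v_\ell=0$. Since there exists a unimodular $U\in\FD^\nxn$ with $UA=H$, every left combination of rows of $A$ is a left combination of rows of $H$ and vice versa; explicitly, $\L(A)=\L(H)$. So I may assume $v=\sum_{i=1}^{n} b_i\,\row(H,i)$ for some $b_1,\ldots,b_n\in\FD$.

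Next I would use the upper triangular shape of $H$. Since $H_{ij}=0$ for $i>j$, the $j$th coordinate of $v$ is $v_j=\sum_{i=1}^{j} b_i H_{ij}$. I then establish $b_1=\cdots=b_{\ell-1}=0$ by induction on $j$ for $j=1,\ldots,\ell-1$, using the hypothesis $v_j=0$: the base case gives $b_1 H_{11}=0$ with $H_{11}\neq 0$ (since $A$ is nonsingular, every diagonal entry of $H$ is nonzero), hence $b_1=0$; the inductive step gives $v_j=b_j H_{jj}=0$, so $b_j=0$. After this cascade, the $\ell$th coordinate collapses to $v_\ell = b_\ell H_{\ell\ell}$.

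Finally, I would apply the degree function: if $b_\ell\neq 0$, then $\degD v_\ell=\degD b_\ell+\degD H_{\ell\ell}\geq h_\ell$, contradicting the hypothesis $\degD v_\ell<h_\ell$. Hence $b_\ell=0$ and $v_\ell=0$, giving the first statement; the contrapositive yields the second. There is no real obstacle here — the proof is essentially bookkeeping once one observes $\L(A)=\L(H)$. The only subtlety is making sure $\degD$ behaves correctly on the product $b_\ell H_{\ell\ell}$, which is exactly the multiplicativity of the degree function on $\FD$ recalled at the start of Section \ref{sec:orebounds}.
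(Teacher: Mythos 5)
Your proof is correct and is essentially the standard argument that the paper defers to (it gives no proof of its own, citing the analogous Lemma 4 in \S 4.3.1 of Storjohann's thesis, which proceeds exactly as you do): pass from $\L(A)$ to $\L(H)$ via the unimodular multiplier, use upper triangularity and the nonzero diagonal to kill $b_1,\ldots,b_{\ell-1}$, and then compare degrees in $v_\ell=b_\ell H_{\ell\ell}$ using multiplicativity of $\degD$.
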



The following theorem is analogous to \cite[\S4.3.1, Lemma 5]{Sto94},
with a different, slightly weaker degree bound.  

\begin{thm}
  \label{thm:herlinsys}
   Let $A\in\FD^\nxn$ have full rank, with $\degD A_{ij}\leq
   d$ for $1\leq i,j\leq n$.   Let $(d_1,\ldots,d_n)$ be a given
   vector of non-negative integers.  Let $T$ be an $n\times n$ matrix
   with $T_{ij}=\sum_{0\leq k\leq \varrho} t_{ijk}\D^k$ for unknowns
   $t_{ijk}$, where $\varrho\geq (n-1)d+\max_i \{d_i-h_i\}$.
  Consider the system of equations in $t_{ijk}$ with constraints:
   \begin{equation}
    \label{eq:TA}
   \begin{array}{llll}
    (TA)_{i,i,d_i} & = 1, & \mbox{for $1\leq i\leq n$,} & \emph{--- diagonal
      entries are monic;}\\
    (TA)_{i,i,k} & = 0, & \mbox{for $k>d_i$,} & \emph{--- diagonal entry
      in row $i$ has degree $d_i$;}\\
    (TA)_{i,j,k} & = 0, & \mbox{for $i\neq j$ and $k\geq d_j$} &
    \emph{--- off diagonal entries have lower degree}\\[-6pt]
    &&& \hspace*{14pt}\emph{than the diagonal entry in that column.}
   \end{array}
   \end{equation}

   By a \emph{solution for $T$} we mean an assignment of variables
   $t_{ijk}\gets \alpha_{ijk}\in\F$ for some $1\leq i,j\leq n$ and $0\leq
   k\leq \varrho$ such \eqref{eq:TA} holds.

   Let $h_1,\ldots,h_n\in\NN$ be the degrees of the diagonal entries
   of the Hermite form of $A$.  The following statements about the
   above system hold:
   \begin{enumerate}
   \item[(i)] If $d_i\geq h_i$ for $1\leq i\leq n$ then there exists a
     solution for $T$;
   \item[(ii)] If there exists a positive integer $\ell\leq n$ such
     that $d_i=h_i$ for $1\leq i<\ell$ and $d_\ell<h_\ell$ then there
     is no solution for $T$;
   \item[(iii)] If $d_i=h_i$ for $1\leq i\leq n$ then there is a
     unique solution for $T$ such that $G=TA$ is equal to the Hermite
     form of $A$ under that solution.
    \end{enumerate}
\end{thm}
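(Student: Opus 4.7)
The plan is to handle (iii) and (ii) via a single structural observation, then build on it for (i). The key is that any solution $T$ produces $G = TA$ whose rows lie in $\L(A) = \L(H)$, so each row has a unique expansion $G_{i,*} = \sum_k w_{i,k} H_{k,*}$ with $w_{i,k} \in \FD$ (uniqueness from full rank of $H$ over $\qFD$). The column-wise degree constraints in \eqref{eq:TA} then force tight conditions on the $w_{i,k}$ through additive degree on $\FD$.

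For (iii), existence comes from $T = U$ (where $UA = H$): the system \eqref{eq:TA} reduces to the Hermite-form conditions on $H$, and Theorem~\ref{thm:HermDegV} bounds $\degD U \leq (n-1)d \leq \varrho$. For the correctness claim $G = H$, I fix a row $i$ and process columns $j = 1, 2, \ldots, n$. For $j < i$, upper-triangularity of $H$ gives $G_{i,j} = w_{i,j} H_{j,j}$, and the off-diagonal constraint $\degD G_{i,j} < d_j = h_j = \degD H_{j,j}$ forces $w_{i,j} = 0$. At $j = i$, $G_{i,i} = w_{i,i} H_{i,i}$ is monic of degree $h_i$, so $w_{i,i} = 1$. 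For $j > i$, $G_{i,j} - H_{i,j} = \sum_{i < k \leq j} w_{i,k} H_{k,j}$, and the same degree comparison (using $\degD H_{i,j} < h_j$ from Hermite condition~(iv)) forces $w_{i,k} = 0$ for $k > i$. Thus $G = H$, and uniqueness of $T$ follows from invertibility of $A$ over $\qFD$.

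For (ii), suppose for contradiction that $T$ is a solution. The same column-by-column induction applied to row $\ell$, under the hypothesis $d_j = h_j$ for $j < \ell$, forces $w_{\ell,k} = 0$ for $k < \ell$. Then $G_{\ell,\ell} = w_{\ell,\ell} H_{\ell,\ell}$ must be monic of degree $d_\ell$, yet additive degree gives $\degD w_{\ell,\ell} = d_\ell - h_\ell < 0$, so $w_{\ell,\ell} = 0$ and $G_{\ell,\ell} = 0$, contradicting monicity.

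For (i), I build a solution explicitly. Set $D = \diag(\D^{d_1 - h_1}, \ldots, \D^{d_n - h_n})$; then $J = DH$ is upper triangular with monic diagonal entries of degree $d_i$ (monicity is preserved since $\D^k \cdot H_{i,i}$ remains monic, using $\sigma(1) = 1$). Applying Lemma~\ref{lem:unired} to $J$ produces a unimodular upper triangular matrix $R$ with ones on the diagonal such that each off-diagonal entry of $RJ$ in column $j$ has degree strictly less than $d_j$. Setting $T = RDU$ gives $TA = RJ$, which satisfies \eqref{eq:TA}. The main technical point is the degree bound $\degD T \leq \varrho$: tracking the reduction quotients shows $\degD R_{i,k} < d_i - h_i + h_k - d_k$ for $k > i$, so in the product $T_{i,j} = \sum_k R_{i,k} \D^{d_k - h_k} U_{k,j}$ the exponent shifts in $D$ cancel those in $R$, yielding $\degD T_{i,j} \leq (d_i - h_i) + (n-1)d \leq \varrho$ after using $\degD U \leq (n-1)d$ from Theorem~\ref{thm:HermDegV}.
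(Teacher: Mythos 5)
Your proof is correct and follows essentially the same route as the paper: part (i) uses the identical construction $T=(\text{reducer})\cdot D\cdot U$ with $D=\diag(\D^{d_i-h_i})$ and the same degree accounting, while parts (ii) and (iii) carry out the same column-by-column degree induction, merely inlining the content of Lemma~\ref{lem:latmem} (membership in $\L(A)=\L(H)$ plus additivity of $\degD$ forces low-degree leading entries to vanish) rather than citing it. The only cosmetic caveat is that your intermediate bound $\degD R_{i,k}<d_i-h_i+h_k-d_k$ should be read as $\degD R_{i,k}<\max\{0,\,d_i-h_i+h_k-d_k\}$ (with $R_{i,k}=0$ when the right side is $0$), which still yields $\degD (RD)_{i,k}\leq d_i-h_i$ and hence $\degD T\leq\varrho$ exactly as in the paper.
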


\begin{proof}
  Let $H\in\FD^\nxn$ be the Hermite form of $A$ and let $U\in\FD^\nxn$
  be the unique unimodular matrix such that $UA=H$.

  To show (i), let $D=\diag(\D^{d_1-h_1},\ldots,$
  $\D^{d_n-h_n})\in\FD^\nxn$, and consider the equality $DUA=DH$.  Let
  $H^*\in\FD^\nxn$ be the Hermite form of $DH$ and $U^*\in\FD^\nxn$
  the unimodular matrix such that $U^*DUA=H^*$.  We construct $H^*$ 
  from $DH$ simply by reducing the entries above the diagonal (since
  it is already upper triangular).  Thus $U^*$ is upper triangular,
  with ones on the diagonal, and $\degD U^*_{ij}< (d_i-h_i)-(d_j-h_j)$
  for $i<j$.  We claim $T=U^*DU$ is a solution to \eqref{eq:TA} .
  First note that the particular choice of $D$, together with the
  definition of $H^*$ ensure that the constraints of \eqref{eq:TA} are
  met.  Furthermore, entries in the $i$th row of $U^*D$ have degree at
  most $d_i-h_i$.  By Theorem \ref{thm:HermDegV}, $\degD U\leq (n-1)d$,
  hence $\degD T \leq (n-1)d+\max_i\{d_i-h_i\}\leq \varrho$.


  To prove (ii), suppose by contradiction that there exists a
  nonnegative integer $\ell\leq n$ and a solution $T$ such that
  $\degD((TA)_{ii})=d_i$ for $1\leq i<\ell$ and
  $\degD((TA)_{\ell\ell})<h_\ell$.  Note that
  $\row(TA,\ell)=((TA)_{\ell,1},\ldots,(TA)_{\ell,n})$ is in $\L(A)$.
  First, if $\ell=1$, then $\degD (TA)_{\ell,1}<h_1$, so by Lemma
  \ref{lem:latmem}, $(TA)_{1,1}=0$, which is impossible since
  \eqref{eq:TA} ensures this entry is monic.  Now assume $\ell>1$.
  Then $\degD (TA)_{\ell,1}<h_1$ (to satisfy \eqref{eq:TA}), and hence
  by Lemma \ref{lem:latmem}, so $(TA)_{\ell,1}=0$.  A simple induction
  shows that $(TA)_{\ell,j}=0$ for $1\leq j<\ell$.  Now consider
  $(TA)_{\ell,\ell}$, which has degree $d_\ell<h_\ell$ by our
  assumption.  Again by Lemma \ref{lem:latmem} $(TA)_{\ell,\ell}=0$,
  which \eqref{eq:TA} ensures is monic, a contradiction.

  If the conditions of (iii) hold, then by (i) there exists at least
  one solution for $T$.  We can use an inductive proof similar to that
  used in our proof of (ii) to show that elements below the diagonal
  in $TA$ are zero (i.e., that $(TA)_{ij}=0$ for $i>j$).  By the
  uniqueness of the Hermite form we must have $TA=H$.
\end{proof}

This theorem allows us to work with a partial order on the degree sequences.
For any $(h_1,\ldots,h_n),(d_1,\ldots,d_n)\in\ZZ^n$, we say that
$(h_1,\ldots,h_n)\preceq (d_1,\ldots,d_n)$ if and only if $h_i\leq d_i$ for all $1\leq
i\leq n$ (and similarly define $\precneqq$ for strict precedence).  Thus, \eqref{eq:TA}
has a solution if and only if $(h_1,\ldots,h_n)\preceq (d_1,\ldots,d_n)$ and
this is unique if and only if $(h_1,\ldots,h_n)=(d_1,\ldots,h_n)$.

We now embed the system \eqref{eq:TA} into a system of linear
equations over $\F$, with no Ore component.  We
embed $\FD$ into vectors over $\F$ via $\tau_\ell:
\FD \to \F^{\ell+1}$, with
\[
\tau_\ell(u_0+u_1\D + u_2 \D^2 + \cdots + u_\ell \D^{\ell-1})  
= 
(u_0,\ldots,u_\ell) \in\F^{\ell+1}.
\]
For $g\in\FD$ of degree $d$, $u\in\FD$ of degree at most $m$, and
assuming $\ell\geq m+d$, the equation $ug=f$ can be realized by a
matrix equation over $\F$:
\[
(u_0,\ldots,u_m) 
\kern-10pt
\underbrace{\left(
  \begin{array}{ccc}
  \tau_\ell(g) \\ \hline
  \tau_\ell(\D g) \\ \hline
  \vdots\\ \hline
  \tau_\ell(\D^m g)
\end{array}
\right)}_{\displaystyle \mu_m^\ell(g)\in\F^{(m+1)\times(\ell+1)}}
\kern-10pt =
(f_0,\ldots,f_\ell)
~~~\iff~~~
\tau_m(u)\,\mu_m^\ell(g)=\tau_\ell(f).
\]

Fixing $d_1,\ldots,d_n\in\NN$ as in Theorem \ref{thm:herlinsys}, and
setting $\varrho\geq (n-1)d+\max_i\{d_i-h_i\}$, we
can then study \eqref{eq:TA}, as realized as (a subset of) the linear
equations in the matrix equation over $\F$:
\begin{equation}
\label{eq:TA=G}
\underbrace{
  \begin{pmatrix}
    \tau_{\varrho}(T_{11}) & \cdots & \tau_{\varrho}(T_{1n}) \\
    \vdots & & \vdots \\
    \tau_{\varrho}(T_{n1}) & \cdots & \tau_{\varrho}(T_{nn}) 
  \end{pmatrix}
}_{\Tlin \in \F^{n\times (\varrho+1)n}}
\underbrace{
  \begin{pmatrix}
    \mu_{\varrho}^{\varrho+d}(A_{11}) & \cdots & \mu_{\varrho}^{\varrho+d}(A_{1n}) \\
    \vdots &  & \vdots \\
    \mu_{\varrho}^{\varrho+d}(A_{n1}) & \cdots & \mu_{\varrho}^{\varrho+d}(A_{nn})
  \end{pmatrix}
}_{\Alin\in\F^{n(\varrho+1)\times (\varrho+d+1)n}}
=
\underbrace{
  \begin{pmatrix}
    \tau_{\varrho+d}(G_{11}) & \cdots & \tau_{\varrho+d}(G_{1n}) \\
    \vdots & & \vdots \\
    \tau_{\varrho+d}(G_{n1}) & \cdots & \tau_{\varrho+d}(G_{nn})
  \end{pmatrix}.
}_{\Glin\in\F^{n\times (\varrho+d+1) n}}
\end{equation}
This set of equations is a superset of the equation \eqref{eq:TA}.
Some entries in $\Glin$ are unknown, in particular those corresponding
to coefficients of degrees (in $\D$) strictly less than the degree of
the diagonal below it.  However, these entries in $\Glin$ are not
mentioned or involved in Theorem \ref{thm:herlinsys}, and we can
remove these columns from $\Glin$.  Similarly, since they impose no
constraint on \eqref{eq:TA}, we can remove the corresponding columns
of $\Alin$.  By Theorem \ref{thm:herlinsys}, if we know
$d_1,\ldots,d_n$, the remaining equations will have a unique solution,
from which we completely determine $\Tlin$.

\enlargethispage{10pt}

\begin{exmp}
\label{ex:hermcomp}
Consider the following matrix in $\QQ(z)[\D;\diff]$
(the differential polynomials over $\QQ(z)$):
\[
A=
\begin{pmatrix}
(z+1)+\D & z+ z\D & \D \\ 
(z^2+z)+z\D  &z+1 &2\D \\ 
(-z-z^2)-z\D & z\D &z\D 
\end{pmatrix}\in\QQ(z)[\D;\diff]^{3\times 3}.
\]
Assume for this example that we know the degrees of the entries in the
Hermite form are $(d_1,d_2,d_3)=(1,0,2)$.  Then $n=3$, and we can set
$\varrho=2$, and have

{\small
\[
\overbrace{
\left(
\begin{array}{ccc|ccc|ccc}
t_{110} & t_{111} & t_{112} & t_{120} & t_{121} & t_{122} & t_{130} &
t_{131} & t_{132} \\\hline
t_{210} & t_{211} & t_{212} & t_{220} & t_{221} & t_{222} & t_{230} &
t_{231} & t_{232} \\\hline
t_{210} & t_{211} & t_{212} & t_{220} & t_{221} & t_{222} & t_{230} & t_{231} & t_{232} 
\end{array}
\right)}^{\mbox{\normalsize $\Tlin$}}
\overbrace{
\left(
\begin{array}{cccc|cccc|cccc}
z+1 & 1 & 0 & 0 & z & z & 0 & 0 & 0 & 1 & 0 & 0 \\
1 & z+1 & 1 & 0 & 1 & z+1 & z & 0 & 0 & 0 & 1 & 0 \\
0 & 2 & z+1 & 1 & 0 & 2 & z+2 & z & 0 & 0 & 0 & 1 \\ \hline
z^2+z & z & 0 & 0 & z+1 & 0 & 0 & 0 & 0 & 2 & 0 & 0 \\
2z+1 & z^2+z+1 & z & 0 & 1 & z+1 & 0 & 0 & 0 & 0 & 2 & 0 \\
2 & 4z+2 & z^2+z+2 & z & 0 & 2 & z+1 & 0 & 0 & 0 & 0 & 2 \\ \hline
-z^2-z & -z & 0 & 0 & 0 & z & 0 & 0 & 0 & z & 0 & 0 \\
-2z-1 & -z^2-z-1 & -z & 0 & 0 & 1 & z & 0 & 0 & 1 & z & 0 \\
-2 & -4z-2 & -z^2-z-2 & -z & 0 & 0 & 2 & z & 0 & 0 & 2 & z \\
\end{array}
\right)}
^{\mbox{\normalsize $\Alin$}}
\]
\[
=
\overbrace{
\left(
\begin{array}{cccc|cccc|cccc}
G_{110} & 1 & 0 & 0 & 0 & 0 & 0 & 0 & G_{130} & G_{131} & 0 & 0 \\
G_{210} &  0 & 0 & 0 & 1 & 0 & 0 & 0 & G_{230} & G_{231} & 0 & 0 \\
G_{310} &  0 & 0 & 0 & 0 & 0 & 0 & 0 & G_{330} & G_{331} & 1 & 0
\end{array}
\right)}^{\mbox{\normalsize $\Glin$}}.
\]
} 

As noted above, $\Glin$ still has some indeterminates, from columns
which specify coefficients of the entries of $G$ which are of degree
strictly less than the maximum in the corresponding column of $G$.
These entries are not mentioned in \eqref{eq:TA}, and we
remove them to form $\Glinred$.  The corresponding columns of $A$ are
similarly not involved in \eqref{eq:TA}, and are removed to form
$\Alinred$.  We obtain now obtain a reduced system of equations which
corresponds precisely to \eqref{eq:TA}:
{\small
\[
\overbrace{
\left(
\begin{array}{ccc|ccc|ccc}
t_{110} & t_{111} & t_{112} & t_{120} & t_{121} & t_{122} & t_{130} &
t_{131} & t_{132} \\\hline
t_{210} & t_{211} & t_{212} & t_{220} & t_{221} & t_{222} & t_{230} &
t_{231} & t_{232} \\\hline
t_{210} & t_{211} & t_{212} & t_{220} & t_{221} & t_{222} & t_{230} & t_{231} & t_{232} 
\end{array}
\right)}^{\mbox{\normalsize $\Tlin$}}
\overbrace{
\left(
\begin{array}{ccc|cccc|cc}
1 & 0 & 0 & z & z & 0 & 0 & 0 & 0 \\
z+1 & 1 & 0 & 1 & z+1 & z & 0 & 1 & 0 \\
2 & z+1 & 1 & 0 & 2 & z+2 & z & 0 & 1 \\ \hline
z & 0 & 0 & z+1 & 0 & 0 & 0 & 0 & 0 \\
z^2+z+1 & z & 0 & 1 & z+1 & 0 & 0 & 2 & 0 \\
4z+2 & z^2+z+2 & z & 0 & 2 & z+1 & 0 & 0 & 2 \\ \hline
-z & 0 & 0 & 0 & z & 0 & 0 & 0 & 0 \\
-z^2-z-1 & -z & 0 & 0 & 1 & z & 0 & z & 0 \\
-4z-2 & -z^2-z-2 & -z & 0 & 0 & 2 & z & 2 & z \\
\end{array}
\right)}^{\mbox{\normalsize $\Alinred$}}
\]
\[
=
\overbrace{
\left(
\begin{array}{ccc|cccc|cccc}
1 & 0 & 0 & 0 & 0 & 0 & 0 & 0 & 0 \\
0 & 0 & 0 & 1 & 0 & 0 & 0 & 0 & 0 \\
0 & 0 & 0 & 0 & 0 & 0 & 0 & 1 & 0
\end{array}
\right)}^{\mbox{\normalsize $\Glinred$}}
\in\F^{n(\varrho+1)\times n(\varrho+1)}.
\]}

By Theorem \ref{thm:herlinsys}, since we have ``guessed'' the degree
sequence of the diagonal entries $(d_1,d_2,d_3)=(1,0,2)$ correctly,
the system has a unique solution:
\[
\Tlin
=
\begin{pmatrix}
{\frac {z+1}{2\,z+1}}&0&0&-{\frac {z }{2\,z+1}}&0&0&-{\frac {z+1}{2\,z+1}}&0&0\\ 
-{\frac{z}{2\,z+1}}&0&0&{\frac {z+1}{2\,z+1}}&0&0&{\frac  {z}{2\,z+1}}&0&0 \\ 
-{\frac {2\,{z}^{2}+3\,z+2}{ \left( {z}^{2}+z+2 \right)  \left( 2\,z+1 \right) }}&-{\frac {z}{{z}^{2}+z+2}}&0&{\frac 
{2\,{z}^{2}+z-1}{ \left( {z}^{2}+z+2 \right)  \left( 2\,z+1 \right) }}
&{\frac {z+1}{{z}^{2}+z+2}}&0&{\frac {2\,{z}^{3}-{z}^{2}-2\,z-1}{z
 \left( {z}^{2}+z+2 \right)  \left( 2\,z+1 \right) }}&{\frac {z}{{z}^{
2}+z+2}}&0
\end{pmatrix}
\]
which corresponds to
\[
T=
\begin{pmatrix}
{\frac {z+1}{2\,z+1}} &   -{\frac {z }{2\,z+1}}&-{\frac {z+1}{2\,z+1}}\\ 
-{\frac{z}{2\,z+1}}&{\frac {z+1}{2\,z+1}}&{\frac {z}{2\,z+1}} \\ 
-{\frac {2\,{z}^{2}+3\,z+2}{ \left( {z}^{2}+z+2 \right)  \left(
      2\,z+1 \right) }} -{\frac {z}{{z}^{2}+z+2}} \D\  & {\frac 
{2\,{z}^{2}+z-1}{ \left( {z}^{2}+z+2 \right)  \left( 2\,z+1 \right)
}} + {\frac {z+1}{{z}^{2}+z+2}}\D\  &{\frac {2\,{z}^{3}-{z}^{2}-2\,z-1}{z
 \left( {z}^{2}+z+2 \right)  \left( 2\,z+1 \right) }}) + {\frac {z}{{z}^{
2}+z+2}}\D\  
\end{pmatrix}
\in\QQ(z)[\D;\diff]^{3\times 3}
\]
giving
\[
H=TA=
\begin{pmatrix}
(z+1)+\D & 0 & -\frac{z^2+2z-1}{2z+1}\D \\
0 & 1 & \frac{z^2+z+2}{2z+1}\D\\
0 & 0 & \frac{2z^3+3z^2-2z-5}{(z^2+z+2)(2z+1)}\D+\D^2
\end{pmatrix}
\in\QQ(z)[\D;\diff]^{3\times 3}
\]
in Hermite form.
\end{exmp}

We can now state our algorithm for computing the Hermite form given the
degrees of the diagonal elements.

\begin{alg}[alg:HermDegs]{HermiteFormGivenDegrees}
  \REQUIRE $A\in\FD^\nxn$ of full rank, with (unknown) Hermite
  form $H$ with diagonal degrees
  $(h_1,\ldots,h_n)\in\NN^n$; 
  \REQUIRE $(d_1,\ldots,d_n)\in\NN^n$, the
  proposed degrees of the diagonal entries of $H$ 
  \ENSURE $H\in\FD^\nxn$ if
  $(d_1,\ldots,d_n)=(h_1,\ldots,h_n)$, or a message that
  $(d_1,\ldots,d_n)$ is lexicographically smaller or larger than
  $(h_1,\ldots,h_n)$;
  
  \STATE Let $\varrho=(n-1)d+\max_i d_i$
  \STATE Form the matrix equation $\Tlin\Alin=\Glin$ as in
  \eqref{eq:TA=G}
  \STATE Remove all columns from $\Glin$ containing an indeterminate,
  and corresponding columns from $\Alin$, to form the ``reduced'' linear
  system $\Tlin\Alinred=\Glinred$, where $\Alinred$ and $\Glinred$ are
  now matrices over $\F$
  \IF{$\rank\Alinred < (n+1)\varrho$}
  \RETURN ``$(h_1,\ldots,h_n) \precneqq (d_1,\ldots,d_n)$"  \qquad // System is underconstrained
  \ENDIF
  \IF{$\Tlin\Alinred=\Glinred$ has no solution}
  \RETURN ``$(h_1,\ldots,h_n)\npreceq (d_1,\ldots,d_n)$"     \qquad // System is inconsistent
  \ENDIF
  \STATE Solve the system $\Tlin\Alinred=\Glinred$ for $\Tlin$ 
  \STATE Construct $T\in\FD^\nxn$ from $\Tlin$
  \RETURN $H=TA$ and $U=T$
\end{alg}

From Theorem \ref{thm:Hdeg} we know that each entry in the Hermite form
of $A\in\FD^{n\times n}$, with $\degD A_{ij}\leq d$ for $1\leq
i,j\leq n$, has degree at most $nd$.  If the diagonal entries of $A$ have degrees $(h_1,\ldots,h_n)$,
then we know that 
\[
(0,\ldots,0) \preceq (h_1,\ldots,h_n) \preceq (nd,nd,\ldots,nd).
\]
Algorithm \ref{alg:HermDegs} detects whether our choice of degree sequence is
equal to, larger than, or not larger than or equal to the actual one.  Thus, a
simple component-wise binary search allows us to find the actual degree sequence
$(h_1,\ldots,h_n)$.  That is, start by finding for the $h_1$ by executing
\ref{alg:HermDegs} with degree sequence $(d_1,nd,\ldots,nd)$ for different
values of $d_1$.  This will require $O(\log(nd))$ attempts.  Then search for
$h_2$ using degree sequence $O(h_1,d_2,nd,\ldots,nd)$ for different values of
$d_2$, etc.  It will require at most $O(n\log(nd))$ attempts to find the entire
correct degree sequence $(h_1,\ldots,h_n)$.

\begin{lem}
  \label{lem:HermBS}
  Given $A\in\FD^\nxn$ of full rank, where each entry has degree (in
  $\D$) less than $d$, we can compute the Hermite form $H\in\FD^\nxn$
  of $A$, and $U\in\FD^\nxn$ such that $UA=H$. The algorithm requires
  us to call Algorithm \ref{alg:HermDegs} $O(n\log(nd))$ times, with input $A$
  and varying degree sequences.
\end{lem}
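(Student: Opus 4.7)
The plan is to combine the degree bound of Theorem~\ref{thm:Hdeg} with the three-valued diagnostic provided by Algorithm~\ref{alg:HermDegs} to locate the true diagonal degree sequence $(h_1,\ldots,h_n)$ via coordinate-wise binary search. By Theorem~\ref{thm:Hdeg}(a), each $h_i$ is a nonnegative integer with $h_1+\cdots+h_n\leq nd$, so in particular $h_i\in\{0,1,\ldots,nd\}$. Rather than explore the $(nd+1)^n$ possible guesses, I will exploit the lexicographic structure of Theorem~\ref{thm:herlinsys} to determine $h_1,h_2,\ldots,h_n$ one at a time.

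First I would establish the following key invariant: once $h_1,\ldots,h_{i-1}$ have been identified, I probe Algorithm~\ref{alg:HermDegs} on the sequence $(h_1,\ldots,h_{i-1},d_i,nd,\ldots,nd)$ for varying $d_i$. By Theorem~\ref{thm:herlinsys}(ii), if $d_i<h_i$ then no solution exists and the algorithm returns ``$(h_1,\ldots,h_n)\npreceq(d_1,\ldots,d_n)$''. If instead $d_i\geq h_i$, then $(h_1,\ldots,h_{i-1},d_i,nd,\ldots,nd)\succeq (h_1,\ldots,h_n)$, so Theorem~\ref{thm:herlinsys}(i) guarantees that a solution exists; moreover, unless the guess equals $(h_1,\ldots,h_n)$ exactly, Theorem~\ref{thm:herlinsys}(iii) shows that the solution is non-unique and the algorithm returns ``$(h_1,\ldots,h_n)\precneqq(d_1,\ldots,d_n)$'' via the rank test. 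Hence the transition point between the ``$\npreceq$'' and ``$\precneqq$'' responses, as $d_i$ increases from $0$ to $nd$, occurs exactly at $d_i=h_i$, which can be found by a standard binary search over $\{0,\ldots,nd\}$ in $O(\log(nd))$ calls. Performing this search for $i=1,2,\ldots,n$ in turn determines the entire sequence $(h_1,\ldots,h_n)$ in $O(n\log(nd))$ calls.

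Finally, with the correct diagonal degree sequence in hand, one additional invocation of Algorithm~\ref{alg:HermDegs} with input $(h_1,\ldots,h_n)$ falls into case (iii) of Theorem~\ref{thm:herlinsys}, producing the unique $T$ with $TA=H$ in Hermite form; we then set $U=T$. This gives both $H$ and $U$ within the claimed $O(n\log(nd))$ total calls. The one step I would verify with care is the equivalence between solutions of the reduced system $\Tlin\Alinred=\Glinred$ (formed by deleting the columns of $\Glin$ that correspond to unconstrained coefficients below the diagonal) and solutions of~\eqref{eq:TA}, since the rank test in the algorithm distinguishes the ``unique'' from ``underconstrained'' cases only when this reduction is performed correctly. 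Once this correspondence is checked, the binary-search bookkeeping and call count are immediate.
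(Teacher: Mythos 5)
Your proposal is correct and follows essentially the same route as the paper: the paper likewise uses the bound $(h_1,\ldots,h_n)\preceq(nd,\ldots,nd)$ from Theorem~\ref{thm:Hdeg} and a component-wise binary search, probing $(h_1,\ldots,h_{i-1},d_i,nd,\ldots,nd)$ for each coordinate in turn, for $O(n\log(nd))$ total calls to Algorithm~\ref{alg:HermDegs}. Your additional observation that the monotone predicate for the binary search is really just ``a solution exists,'' guaranteed by parts (i) and (ii) of Theorem~\ref{thm:herlinsys} to flip exactly at $d_i=h_i$, is a correct and slightly more explicit justification than the paper's.
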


For a first, general analysis of the complexity we will assume that
operations in $\F$ have unit cost (and hence no coefficient growth is
accounted for).  To perform the rank test in Step 4, the inconsistency
test in Step 6, and the equation solution in Step 8, we can simply do
an LU decomposition of $\Alinred$ using Gaussian elimination.
$\Alinred$ has size $n(\varrho+1)\times m$, where $n(\varrho+1)\leq
m\leq n(\varrho+d+1)$, i.e., $O(n^2d) \times O(n^2d)$.  Gaussian
elimination, which computes an $LU$-decomposition or more generally a
Bruhat normal form (see Section \ref{ssec:dieu} or \citep[Chapter
19]{Dra83}) is effective over any skew field, and on a $p\times q$
matrix requires $O(p^2q)$ operations, and hence in our case can be
accomplished with $O(n^6d^3)$ operations in $\F$.  Combining this with Lemma
\ref{lem:HermBS} we obtain the following.

\begin{thm}
  \label{thm:sqhermFcost}
  Let $A\in\FD^\nxn$ have full rank with entries of degree (in
  $\D$) less than $d$.  We can compute the Hermite form $H\in\FD^\nxn$
  of $A$, and $U\in\FD^\nxn$ such that $UA=H$. The algorithm requires
  $O(n^7d^3\log(nd))$ operations in $\F$.
\end{thm}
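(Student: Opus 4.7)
The plan is to assemble the theorem directly from the ingredients established immediately before it: the binary‐search reduction in Lemma \ref{lem:HermBS}, the structural description of Algorithm \ref{alg:HermDegs}, and the elementary cost of Gaussian elimination over a (skew) field. First I would note that by Theorem \ref{thm:Hdeg} the diagonal degrees $(h_1,\ldots,h_n)$ of the Hermite form of $A$ satisfy $0\leq h_i\leq nd$, so the outer binary search explores candidate sequences $(d_1,\ldots,d_n)\in\{0,\ldots,nd\}^n$. Lemma \ref{lem:HermBS} guarantees that $O(n\log(nd))$ invocations of Algorithm \ref{alg:HermDegs} suffice to pin down $(h_1,\ldots,h_n)$ exactly, at which point the same call also returns $T$ (hence $U=T$ and $H=TA$).

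Next I would analyze the cost of a single invocation of Algorithm \ref{alg:HermDegs}. For any candidate sequence arising in the search we have $\max_i d_i\leq nd$, so the parameter $\varrho=(n-1)d+\max_i d_i$ is $O(nd)$. Consequently the reduced linear system $\Tlin\Alinred=\Glinred$ over $\F$ has $\Alinred$ of dimensions $n(\varrho+1)\times m$ with $n(\varrho+1)\leq m\leq n(\varrho+d+1)$; both dimensions are therefore $O(n^2 d)$. Steps 4, 6, and 8 of the algorithm are all handled by a single triangular decomposition of $\Alinred$: the rank test, consistency test, and back substitution for $\Tlin$ can each be read off from an LU (or Bruhat) decomposition of $\Alinred$, which is valid over a skew field (see Section \ref{ssec:dieu} and \citep[Chapter 19]{Dra83}).

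The key cost estimate is that Gaussian elimination on a $p\times q$ matrix over a (possibly skew) field uses $O(p^2 q)$ field operations; with $p,q\in O(n^2 d)$ this gives $O(n^6 d^3)$ operations in $\F$ per invocation. Constructing $\Alin$, $\Glin$, and forming the reduced system $\Alinred$, $\Glinred$ from $A$ and $(d_1,\ldots,d_n)$ is easily done within the same bound, as is reassembling $T\in\FD^\nxn$ from $\Tlin$ and performing the final product $H=TA$ (whose entries have degree $O(nd)$). Multiplying the per‐call cost by the $O(n\log(nd))$ calls from Lemma \ref{lem:HermBS} yields the claimed $O(n^7 d^3 \log(nd))$ operations in $\F$.

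The only part requiring care is confirming that all intermediate objects stay inside the advertised size bounds — in particular that $\varrho=O(nd)$ throughout the binary search, and that the elimination routine is legitimate over a non-commutative $\F$; both have already been arranged by the earlier sections (Theorem \ref{thm:Hdeg} for the degree bound and the Bruhat normal form discussion of Section \ref{ssec:dieu} for the elimination), so no additional work is needed. There is no genuine obstacle here; this is purely a composition of the previously established tools with a standard complexity count.
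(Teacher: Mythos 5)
Your proposal matches the paper's own argument: it bounds $\varrho$ by $O(nd)$ so that $\Alinred$ is $O(n^2d)\times O(n^2d)$, charges $O(n^6d^3)$ per call to Algorithm \ref{alg:HermDegs} via Gaussian elimination (valid over a skew field by the Bruhat decomposition), and multiplies by the $O(n\log(nd))$ calls from Lemma \ref{lem:HermBS}. This is correct and essentially identical to the paper's proof.
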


We next analyze our algorithm for computing the Hermite form of a
matrix $A\in\k(z)[\D;\sigma,\delta]^\nxn$ over the field $\F=\k(z)$,
where $\k$ is a field and $z$ an indeterminate.  Without loss of
generality $A\in\k[z][\D;\sigma,\delta]^\nxn$ by clearing denominators
(which is a left-unimodular operation), but note that the Hermite form
may still be in $\k(z)[\D;\sigma,\delta]$ (see Example \ref{ex:hermcomp}).
We will also assume for convenience that $\sigma(z)\in\k[z]$ and
$\deg_z\delta(z)\leq 1$.  Thus $\D z=\sigma(z)\D+\delta(z)\in\k[z][\D]$
and the degree in $z$ and $\D$ remains unchanged.  A more general
analysis could follow similarly.

We assume that multiplying two polynomials in $\k[z]$ of degree at
most $m$ can be accomplished with $O(\M(m))$ operations in $\k$:
$\M(m)=m^2$ using standard arithmetic or $\M(m)=m\log m \log\log m$
using fast arithmetic \citep{CanKal91}.  We similarly assume that two
integers with $\ell$ bits can be multiplied with $O(\M(l))$ bit
operations.  Finally, when we talk of the \emph{degree} of a rational
function in $\k(z)$ we mean the maximum degree of the numerator and
denominator, assuming they are co-prime.  This gives a reasonable
indication of representation size.

\begin{thm}
  \label{thm:sqhermcost}
  Let $A\in\k[z][\D;\sigma,\delta]^\nxn$ have full rank with
  entries of degree at most $d$ in $\D$, and of degree at most $e$ in
  $z$.  Let $H\in\k(z)[\D;\sigma,\delta]^\nxn$ be the Hermite form of
  $A$ and $U\in\k(z)[\D;\sigma,\delta]^\nxn$ such that $UA=H$.  
  \begin{enumerate}
    \item[(a)] $\deg_z H_{ij} \in O(n^2de)$ and $\deg_z U_{ij}\in
      O(n^2de)$ for $1\leq i,j\leq n$.
    \item[(b)] We can compute $H$ and $U$ deterministically with 
      $O(n^7d^3\log(nd)\cdot\M(n^2de))$ or $\softO(n^9d^4e)$
      operations in $\k$.
    \item[(c)] Assume $\k$ has at least $4n^2de$ elements.  We can
      compute the Hermite form $H$ and $U$ with an expected number of
      $O(n^7d^3\log(nd)+n^7d^3e)$ of operations in $\k$ using
      standard polynomial arithmetic.  This algorithm is
      probabilistic of the Las Vegas type; it never returns an
      incorrect answer.
    \end{enumerate}
\end{thm}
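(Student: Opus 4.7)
My plan is to leverage the linear-algebraic reformulation $\Tlin\Alinred=\Glinred$ established in Theorem~\ref{thm:herlinsys} and the binary search scheme of Lemma~\ref{lem:HermBS}, then account carefully for the $z$-degree growth throughout.

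For part~(a), I would apply Cramer's rule directly to the linear system $\Tlin\Alinred=\Glinred$. When the correct diagonal degree sequence is used, this is a square $N\times N$ system over $\k(z)$ with $N=n(\varrho+1)=O(n^2d)$, whose coefficient matrix $\Alinred$ has entries in $\k[z]$ of $z$-degree at most $e$ (since multiplication by $\D$ preserves $z$-degree under our standing assumption on $\sigma$ and $\delta$). Cramer's rule expresses each entry of the unique solution $\Tlin$ as a ratio of $N\times N$ minors of $\Alinred$ and $\Glinred$, so the numerator and denominator of each coefficient have $z$-degree at most $Ne=O(n^2de)$. Since $\Tlin$ encodes the coefficients of $U$, this bounds $\deg_z U_{ij}$. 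The bound on $H$ then follows from $H=UA$: each entry is a sum of at most $n$ products of a $U$-coefficient (of $z$-degree $O(n^2de)$) and an $A$-coefficient (of $z$-degree $\leq e$), and the adjustments from $\D z=\sigma(z)\D+\delta(z)$ contribute only $O(nd)$ to the $z$-degree.

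For part~(b), I would run Algorithm~\ref{alg:HermDegs} in the binary search of Lemma~\ref{lem:HermBS}, a total of $O(n\log(nd))$ times. Each invocation performs Gaussian elimination (equivalently a Bruhat decomposition) on a matrix of size $O(n^2d)\times O(n^2d)$ over $\k(z)$, which takes $O(n^6 d^3)$ operations in $\k(z)$. By the Cramer bound of part~(a) applied to every intermediate subsystem, the rational functions appearing during elimination have $z$-degree $O(n^2de)$; each $\k(z)$ operation therefore costs $O(\M(n^2de))$ operations in $\k$. Multiplying: each call takes $O(n^6 d^3\,\M(n^2de))$ operations in $\k$, and summing over the $O(n\log(nd))$ calls yields $O(n^7d^3\log(nd)\cdot\M(n^2de))$, which is $\softO(n^9d^4e)$ using fast polynomial multiplication.

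For part~(c), the probabilistic variant replaces each rational-function solve in the binary search by a single Monte Carlo evaluation: pick a random $z_0$ from a subset $S\subseteq\k$ of size at least $4n^2de$, substitute, and run Gaussian elimination over $\k$ at cost $O(n^6 d^3)$ using standard arithmetic. The Schwartz--Zippel lemma, together with the fact that the ``bad'' evaluation points are exactly the zeros of certain $N\times N$ minors of total $z$-degree $O(n^2de)$, guarantees that one evaluation preserves rank and consistency with probability at least $3/4$. Thus the binary search contributes $O(n^7d^3\log(nd))$ operations. Once the correct diagonal degrees $(h_1,\ldots,h_n)$ are known, the final solve is carried out by $p$-adic (Dixon-style) lifting from a single good evaluation point: after an $O(n^6d^3)$ preconditioning, each of $O(n^2de)$ lifting steps costs $O(n^4d^2)$, giving $O(n^7d^3 e)$ for the full $\k(z)$-solution $\Tlin$. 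For the Las~Vegas guarantee, I would compute $H=UA$ explicitly and verify that $H$ satisfies conditions~(i)--(iv) of the Hermite form definition, and certify unimodularity of $U$ by checking $\degD\Ddet U=0$ using the results of Section~\ref{sec:orebounds}; any failure is detected and triggers a retry with fresh randomness.

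The main obstacle I expect is in part~(c): precisely quantifying the set of ``bad'' evaluation points so that Schwartz--Zippel applies with the stated field-size $4n^2de$, and engineering the lifting step so that the final solve cost is $O(n^7d^3 e)$ rather than the naive evaluation--interpolation cost of $O(n^8d^4e)$. Both steps depend critically on the $O(n^2de)$ degree bound from part~(a), which must be available \emph{before} the solve in order to bound the number of lifting steps and the size of the bad-point set.
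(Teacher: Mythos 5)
Your proposal is correct and follows essentially the same route as the paper's proof: part (a) via Hadamard/Cramer bounds on the $O(n^2d)\times O(n^2d)$ linearized system $\Tlin\Alinred=\Glinred$, part (b) via the binary search of Lemma~\ref{lem:HermBS} with elimination cost multiplied by the cost of coefficient arithmetic, and part (c) via random evaluation at $z=\alpha$ for the rank/consistency tests plus Dixon lifting to order $(z-\alpha)^{2n^2de}$ for the final solve, exactly as in the paper (your per-step lift cost $O(n^4d^2)$ is per right-hand side; with the $n$ rows of $\Tlin$ it matches the paper's $O(n^5d^2)$). The one genuine difference is in part (b): you perform Gaussian elimination directly over $\k(z)$ and control intermediate degrees by observing that the entries are ratios of minors, whereas the paper computes in $\k[z]/(\Gamma)$ for an irreducible $\Gamma$ of degree $>2n^2de$ and recovers the answer by rational function reconstruction. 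Your variant is viable, but as stated it elides a point: to keep the intermediate fractions at degree $O(n^2de)$ you must either reduce them (each reduction is a gcd, costing $\softO(\M(n^2de))$ rather than $O(\M(n^2de))$) or switch to fraction-free (Bareiss) elimination where all intermediates are genuine minors; the paper's homomorphic-imaging route sidesteps this bookkeeping, though both land on the same $\softO(n^9d^4e)$ bound. Your explicit final verification of the Hermite conditions and of $\degD\Ddet U=0$ for the Las Vegas guarantee is a reasonable addition that the paper leaves implicit.
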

\begin{proof}
  To show (a), recall that the matrix $\Alinred$ is of size
  $O(n^2d)\times O(n^2d)$ and degree $O(e)$.  Using Hadamard's bound and Cramer's rule,
  the numerators and denominators in $\Tlin$ thus have degree at most
  $O(n^2de)$ in $z$.  $H=UA$ has the same degree bound in $z$.

  To prove (b) we solve the system of equations \eqref{eq:TA} as in
  Theorem \ref{thm:sqhermFcost} but now taking into account
  coefficient growth.  Since we have an explicit bound on the degree
  in $z$ of numerators and denominators of the solution, we can
  compute modulo an irreducible polynomial $\Gamma\in\k[z]$ more than
  twice this degree and recover the solution over $\k(z)$ by rational
  recovery.  Each operation in $\k(z)\in\k[z]/(\Gamma)$ will thus take
  $\softO(\M(n^2de))$ operations in $\k$.  The stated total cost
  follows from the cost in Theorem \ref{thm:sqhermFcost} multiplied by
  this operation cost.

  To show (c), we note that the tests for rank deficiency in Step 4,
  and inconsistency in Step 6, can be done by considering the equation
  $\Tlin\Alinred=\Glinred\bmod (z-\alpha)$ for a randomly chosen
  $\alpha$ from a subset of $\k$ of size at least $4n^2de$.  This
  follows because the largest invariant factor $w\in\k[z]$ of
  $\Alinred$ has degree at most $n^2de$ by Hadamard's bound (see part
  (a)), and the rank modulo $(z-\alpha)$ changes only if $\alpha$ is a
  root of $w$.  By the Schwartz-Zippel Lemma \citep{Sch80} this
  happens with probability at most $1/4$ for each choice of $\alpha$
  (and this probability of error can be made exponentially smaller by
  repeating with different random choices).  Thus, these tests require
  only $O(n^6d^3)$ operations in $\k$ to perform, correctly with high
  probability.  During the binary search for the degree sequence we
  only perform these cheaper tests, requiring a total of
  $O(n^7d^3\log(nd))$ operations in $\k$ before finding the correct
  degree sequence.

  Once we have found the correct degree sequence, we employ Dixon's
  \citeyearpar{Dix82} algorithm to solve the linear system over
  $\k(z)$ (this is the fastest known algorithm using standard matrix
  arithmetic, and is very effective in practice; one could also employ
  the asymptotically faster method of \cite{Sto03} with sub-cubic
  matrix arithmetic).  This lifts the
  solution to the system modulo $(z-\alpha)^i$ for
  $i=1,\ldots,2n^2de$, where $\alpha$ is a non-root of the (unknown)
  largest invariant factor of $A$ (i.e., is such that $\rank A=\rank
  A\bmod (z-\alpha)$).  Computing the solution modulo
  $(z-\alpha)^{2n^2de}$ is sufficient to recover the solution in
  $\k(z)$ using rational function reconstruction, since both the
  numerator and denominator have degree less than $n^2de$ by part (a);
  see \citep{GatGer03}, Section 5.7.  A random choice of $\alpha$ from
  a subset of $\k$ of size $4n^2de$ is sufficient to obtain a non-zero
  of the largest invariant factor (and hence not change the dimension
  of the solution space) with probability at least $1/4$ by the
  Schwartz-Zippel Lemma.  In the first step of Dixon's algorithm, we
  compute the LU-decomposition of $A\bmod (z-\alpha)$ using
  $O(n^6d^3)$ operations in $\k$.  We then lift the solution to
  $\Tlin\Alinred\equiv\Glinred\bmod (z-\alpha)^i$ for
  $i=0,\ldots,2n^2de$.  Each lifting step requires $O(n^5d^2)$
  operations in $\k$, yielding a total cost of $O(n^7d^3e)$.
\end{proof}

For comparison, the cost of the algorithm in \citep{GieKim09}, for the
case of matrices over $\k(z)[\D;\diff]$, required
$\softO(n^{10}d^4e)$ operations in $\k$.

Finally, we consider coefficient growth in $\QQ$ of Ore polynomial
rings over $\QQ(z)$.  For the computation, once we have constructed
the matrix $\Alinred$, we can bound the coefficient-sizes in $\Tlin$
directly using Hadamard-type bounds.  We can then employ a Chinese
remainder scheme to find the Hermite form using the above algorithm
(or any other method, for that matter).  For example, we could simply
choose a single prime $p$ with twice as many bits as the largest
numerator or denominator in the solution to \eqref{eq:TA=G} and then
compute modulo that prime, in $\ZZ_p[z]$; the rational coefficients of
$H$ can be recovered by integer rational reconstruction from their
images in $\ZZ_p$ \cite[\S 5.7]{GatGer03}.  

However, for the purposes of analysis, it is interesting to see how
big these coefficients can grow.  We consider matrices in
$A\in\ZZ[z][\D;\sigma,\delta]^\nxn$ without loss of generality.  For
convenience in this analysis (though not in complete generality), we assume
that $\deg_z\delta(z)\leq 1$ and $\sigma(z)\in\ZZ[z]$, so $\D
z=\sigma(z)\D+\delta(z)\in\ZZ[z]$.

For a polynomial $a=a_0+a_1z+\cdots+a_mz^m\in\ZZ[z]$, let
$\inorm{a}=\max_i|a_i|$.  For
$f=f_0(z)+f_1(z)\D+\cdots+f_r(z)\D^r\in\ZZ[z][\D;\sigma,\delta]$, let
$\inorm{f}=\max_i\inorm{f_i}$.  Define $\inorm{A}=\max_{ij}
\inorm{A_{ij}}$.  In equation \eqref{eq:TA=G}, the entries in $\Alinred$
have size at most
\begin{equation}
\label{eq:Abd}
\rnorm{A} = \max_{ij} \max_\ell \left\{ \inorm{A_{ij}}, \inorm{\D A_{ij}},
    \ldots,\inorm{\D^\varrho A_{ij}}\right\}\in\ZZ.
\end{equation}

\begin{thm}
  \label{thm:bitbd}
  Let $A\in\ZZ[z][\D;\sigma,\delta]^\nxn$ be of full rank and such
  that $\degD(A)=d$, $\deg_z(A)\leq e$ and $\rnorm{A}\leq\beta$.
  Then the Hermite form $H\in\QQ(z)[\D;\sigma,\delta]^\nxn$ and
  $U\in\QQ(z)[\D;\sigma,\delta]^\nxn$ such that $UA=H$ satisfy
  \[
  \log\inorm{H},\ \log\inorm{U} \in O(n^2d\,(\log e+\log\beta+\log n +
  \log d)).
  \]
\end{thm}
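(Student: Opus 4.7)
The plan is to derive both bounds from the linear system $\Tlin\Alinred=\Glinred$ used in Algorithm~\ref{alg:HermDegs}, which for the correct degree sequence has a unique solution encoding $U$ by Theorem~\ref{thm:herlinsys}(iii). First I would reinterpret $\Alinred$ as a square matrix over $\ZZ[z]$ of dimension $k=n(\varrho+1)$. Since Theorem~\ref{thm:Hdeg} bounds the diagonal degrees of $H$ by $nd$, the binary search in the algorithm keeps $\max_i d_i \leq nd$, so $\varrho=(n-1)d+\max_i d_i=O(nd)$ and $k=O(n^2d)$. Every entry of $\Alinred$ is a $z$-coefficient of some $\D^{j}A_{i\ell}$ with $0\leq j\leq\varrho$, hence has infinity-norm at most $\rnorm{A}\leq\beta$. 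Under the paper's standing assumptions $\sigma(z)\in\ZZ[z]$ and $\deg_z\delta(z)\leq 1$, each application of $\D$ raises $\deg_z$ by at most $1$, so each entry of $\Alinred$ has $\deg_z$ bounded by $e':=e+\varrho=O(nd+e)$. The matrix $\Glinred$ has entries in $\{0,1\}$.

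Next I would apply Cramer's rule over $\QQ(z)$: each scalar coefficient of $\Tlin$ (hence of $U$) is a ratio $\det M'/\det M$ of $k\times k$ polynomial determinants, where $M,M'$ are built from $\Alinred$ and $\Glinred$. Expanding by permutations and using the standard polynomial product bound $\inorm{pq}\leq(\min\{\deg p,\deg q\}+1)\inorm{p}\inorm{q}$, a product of $k$ polynomials of $\deg_z\leq e'$ and norm $\leq\beta$ has infinity-norm at most $(ke')^{k-1}\beta^k$, whence
\[
\inorm{\det M},\ \inorm{\det M'}\ \leq\ k!\,(ke')^{k-1}\beta^k.
\]
Taking logs with $k=O(n^2d)$ and $e'=O(nd+e)$, together with $\log(nd+e)=O(\log n+\log d+\log e)$, yields the claimed bound $O\bigl(n^2d(\log n+\log d+\log e+\log\beta)\bigr)$ for $\log\inorm{U}$.

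For $\log\inorm{H}$ I would use $H=UA$. Each coefficient of $H$ arises from those of $U$ and $A$ by a bounded sequence of multiplications and additions in $\k(z)$; by Theorem~\ref{thm:sqhermcost}(a) the coefficients of $U$ have $\deg_z=O(n^2de)$, while those of $A$ have $\deg_z\leq e$ and norm $\leq\beta$, so each product contributes only a logarithmic factor to the infinity-norm that is absorbed into the claimed bound. The main obstacle is the careful bookkeeping of how repeated applications of $\sigma$ and $\delta$ inside $UA$ move $z$-coefficients past $\D$-terms; however, since each application is governed by the mild hypotheses on $\sigma,\delta$, the resulting inflation is at most polynomial in $n,d,e,\beta$ and contributes no more than the stated big-O to $\log\inorm{H}$.
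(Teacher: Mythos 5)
Your proposal is correct and follows essentially the same route as the paper: the paper likewise bounds the entries of $\Tlin$ as quotients of minors of $\Alinred$ (a matrix of size $O(n^2d)\times O(n^2d)$ with entries of $z$-degree $O(e)$ and height at most $\rnorm{A}\le\beta$) via a Hadamard-type bound of the form $\bigl((1+e)\beta(n^2d)\bigr)^{O(n^2d)}$, which you simply re-derive by expanding the determinants over permutations. Your explicit treatment of $H=UA$ is a welcome addition (the paper's proof only bounds $\Tlin$), and your slightly weaker degree bound $e+\varrho$ on the entries of $\Alinred$ is harmless since it only enters logarithmically.
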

\begin{proof}
  Entries in $\Alinred$ are polynomials in $\ZZ[z]$ of degree at most $e$
  and coefficient size at most $\beta$.  Every minor of $\Alinred$, and
  hence each entry in the solution $\Tlin$, is bounded by Hadamard's
  bound, which in this case is
  \[
  \left ((1+e) \beta  (n^2d)\right)^{O(n^2d)}
  \]
  (see \cite{Gie93} Theorem 1.5 for height bounds on polynomial products).
\end{proof}

By performing all computations modulo an appropriately large prime (as
discussed above), we immediately get the following.
\begin{cor}
  Let $A\in\ZZ[z][\D;\sigma,\delta]^\nxn$ have full rank with
  entries of degree at most $d$ in $\D$, of degree at most $e$ in $z$,
  and $\rnorm{A} \leq \beta$ (where $\varrho=O(n^2d)$).  Let
  $H\in\QQ(z)[\D;\sigma,\delta]^\nxn$ be the Hermite form of $A$ and
  $U\in\QQ(z)[\D;\sigma,\delta]^\nxn$ such that $UA=H$.

  We can compute the Hermite form $H\in\QQ(z)[\D;\sigma,\delta]^\nxn$
  of $A$, and $U\in\QQ(z)[\D;\sigma,\delta]^\nxn$ such that $UA=H$,
  using an algorithm that requires an expected number
  $O((n^7d^3\log(nd)+n^7d^3e)\cdot\M(n^2d(\log
  e+\log\beta+\log n+\log d)))$,
  or $\softO(n^9d^4e\log\beta)$ bit operations.  This
  algorithm is probabilistic of the Las Vegas type (never returning an
  incorrect answer).
\end{cor}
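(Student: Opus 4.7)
The plan is to combine the bit-length bound from Theorem~\ref{thm:bitbd} with the operation count of Theorem~\ref{thm:sqhermcost}(c) via a single-prime modular strategy. Concretely, I would first pick a prime $p$ whose bit-length exceeds twice the Hadamard-type bound on numerators and denominators of the entries of $\Tlin$ (and hence of $U$ and $H$) guaranteed by Theorem~\ref{thm:bitbd}. Then I would run the Las~Vegas algorithm of Theorem~\ref{thm:sqhermcost}(c) verbatim, but with every arithmetic operation in $\QQ$ replaced by an arithmetic operation in $\ZZ_p$, and reconstruct the rational entries of $U$ at the end by integer rational reconstruction in the sense of~\citep[Section~5.7]{GatGer03}. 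The matrix $H$ is then obtained as $H=UA$.

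More concretely, first I would set $\varrho=O(n^2d)$ as required by Algorithm~\ref{alg:HermDegs}, and bound $\log\inorm{U},\log\inorm{H}$ by Theorem~\ref{thm:bitbd} as $O(n^2d(\log e+\log\beta+\log n+\log d))$. Choosing a prime $p$ with $\log p$ of this same order (with a suitable constant) is enough to guarantee that rational reconstruction from residues modulo~$p$ recovers the exact rational coefficients uniquely; such a prime can be found and $A$ reduced modulo $p$ at negligible cost compared to the main computation. Because all the linear algebra steps in the proof of Theorem~\ref{thm:sqhermcost}(c) (the binary search over diagonal degrees using rank/consistency tests at $z=\alpha$ and the Dixon-style lifting once the correct degree sequence is found) are stated over the coefficient field $\F$, they carry over unchanged when that field is $\ZZ_p$ in place of $\k$; the probability analysis for the Las~Vegas tests is unaffected, provided $|\ZZ_p|\ge 4n^2de$, which is automatic from our choice of $p$.

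Multiplying the operation count $O(n^7d^3\log(nd)+n^7d^3e)$ from Theorem~\ref{thm:sqhermcost}(c) by the per-operation bit cost $O(\M(\log p))=O(\M(n^2d(\log e+\log\beta+\log n+\log d)))$ yields the explicit bit-cost bound stated in the corollary. Absorbing logarithmic factors into soft-Oh, and noting that $\log e,\log n,\log d$ are dominated by the polynomial factors already present while $\log\beta$ survives, one obtains $\softO(n^9 d^4 e\log\beta)$ bit operations.

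The one spot that needs a little care is ensuring that the single-prime reduction is compatible with the Las~Vegas analysis of Theorem~\ref{thm:sqhermcost}(c): the random evaluation point $\alpha$ must still come from a subset of $\ZZ_p$ of size $\ge 4n^2de$ (forcing the lower bound on $\log p$), and one must check that a prime $p$ which happens to divide the largest invariant factor of $\Alinred$ does not spoil the rank and consistency tests. This is handled by selecting $p$ large enough (or, equivalently, slightly oversizing $\log p$) so that a standard Schwartz--Zippel argument on the determinantal divisors gives failure probability at most a constant, at which point the Las~Vegas guarantee is preserved and the stated bit-operation count follows directly.
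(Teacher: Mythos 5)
Your proposal is correct and follows essentially the same route as the paper: the paper's own justification is precisely the single large prime strategy (choose $p$ with bit-length about twice the Hadamard-type bound of Theorem~\ref{thm:bitbd}, run the Las Vegas algorithm of Theorem~\ref{thm:sqhermcost}(c) over $\ZZ_p[z]$, and recover the rational coefficients by integer rational reconstruction), with the bit cost obtained by multiplying the operation count by $\M(\log p)$. If anything you are more careful than the paper, which treats the corollary as immediate and does not explicitly discuss the compatibility of the chosen prime with the rank and consistency tests.
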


The following corollary summarizes this growth explicitly over two
common rings, the differential polynomials $\QQ(z)[\D;\diff]$, and the
shift polynomial $\QQ(z)[\D;\shift]$.

\begin{cor}
  Let $A\in\ZZ[z][\D;\sigma,\delta]^\nxn$ be of full rank and such
  that $\degD(A)=d$, $\deg_z(A)\leq e$,
  $H\in\QQ(z)[\D;\sigma,\delta]^\nxn$ the Hermite form of $A$, and
  $U\in\QQ(z)[\D;\sigma,\delta]^\nxn$ such that $UA=H$.  For both the
  differential polynomials $\QQ(z)[\D;\diff]$ (where $\sigma(z)=z$, $\delta(z)=1$) and the
  shift polynomials $\QQ(t)[\D;\shift]$ (where $\sigma(z)=z+1$, $\delta(z)=0$), we have
  \[
  \log\inorm{U},\, \log\inorm{H} \in \softO(n^2d\,(e+\log\inorm{A})).
  \]
\end{cor}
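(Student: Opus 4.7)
The plan is to apply Theorem \ref{thm:bitbd}, which already furnishes $\log\inorm{H}, \log\inorm{U} \in O(n^2 d(\log e + \log \rnorm{A} + \log n + \log d))$, and then to bound $\log \rnorm{A}$ separately in each of the two specific rings. Since $\rnorm{A} = \max_{i,j} \max_{0 \leq \ell \leq \varrho} \inorm{\D^\ell A_{ij}}$ with $\varrho = O(n^2 d)$, the task reduces to controlling how left-multiplication by $\D^\ell$ affects the infinity norms of the coefficients of $A_{ij}$, viewed as polynomials in $\ZZ[z]$.

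For the shift polynomials, the commutation rule $\D a(z) = a(z+1)\D$ iterates to $\D^\ell a(z) = a(z+\ell)\D^\ell$, so the effect on a $z$-coefficient is just a Taylor shift by $\ell$. A direct binomial expansion gives $\inorm{a(z+\ell)} \leq (e+1)\,\ell^e \inorm{a}$, hence after substituting $\ell \leq \varrho$ we get $\log \rnorm{A} \leq O(e \log(nd)) + \log \inorm{A} = \softO(e + \log \inorm{A})$.

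For the differential polynomials, iterating $\D a(z) = a(z) \D + a'(z)$ yields the Leibniz-type identity
\[
\D^\ell a(z) = \sum_{j=0}^{\min(\ell,e)}\binom{\ell}{j}\, a^{(j)}(z)\, \D^{\ell-j},
\]
where the upper limit truncates at $j=e$ because $a^{(j)} = 0$ for $j > \deg_z a \leq e$. Using $\inorm{a^{(j)}} \leq e^j \inorm{a}$ and $\binom{\ell}{j} \leq \ell^j$ bounds each summand by $(\ell e)^j \inorm{A}$, so $\inorm{\D^\ell A_{ij}} \leq (e+1)(\ell e)^e \inorm{A}$, whence $\log \rnorm{A} \leq O(e \log(nde)) + \log \inorm{A} = \softO(e + \log \inorm{A})$.

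The principal subtlety, and the only step requiring care, is this degree-truncation in the differential case: the naive bound summing $\binom{\ell}{j} e^j$ over all $j \leq \ell$ would give $(1+e)^\ell \inorm{a}$, producing $\log \rnorm{A} \in O(n^2 d \log e + \log \inorm{A})$ and a final bound worse than the target by roughly a factor of $n^2 d$. Observing that $a^{(j)}$ vanishes for $j > e$ is what restores the advertised bound. Once $\log \rnorm{A} = \softO(e + \log \inorm{A})$ is in hand for both rings, substituting back into Theorem \ref{thm:bitbd} gives $\log\inorm{H}, \log\inorm{U} \in \softO(n^2 d(e + \log \inorm{A}))$, as claimed.
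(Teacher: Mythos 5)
Your proposal is correct and follows essentially the same route as the paper: invoke Theorem \ref{thm:bitbd} and then bound $\log\rnorm{A}$ ring by ring via the commutation identities $\D^\ell a = \sum_{j\le e}\binom{\ell}{j}a^{(j)}\D^{\ell-j}$ (with exactly the degree-truncation at $j=e$ that the paper also uses) and $\D^\ell a(z)=a(z+\ell)\D^\ell$. The only quibble is that your Taylor-shift bound $\inorm{a(z+\ell)}\leq(e+1)\ell^e\inorm{a}$ silently drops a binomial-coefficient factor of order $2^e$ (the paper keeps a $2^{e/2}\ell^e$ term), but since that contributes only $O(e)$ to the logarithm it does not affect the stated $\softO(n^2d(e+\log\inorm{A}))$ conclusion.
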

\begin{proof}
  To show this for differential polynomials, we note that for
  $a=\sum_{0\leq i\leq d} a_i(z)\D^i\in\ZZ[z][\D;\diff]$,
  \[
  \D^\ell a = \sum_{0\leq j\leq\ell} \binom{\ell}{j} \sum_{0\leq i\leq
    d} a_i(z)^{(j)}\D^{\ell-j},
  \]
  where $a_i(z)^{(j)}$ is the $j$th derivative of $a_i(z)$.  Since
  only the first $e$ derivatives of any $a_i$ are non-zero
  \[
  \inorm{\D^\ell a}\leq  \ell^e\cdot\inorm{a}\cdot e!
  \]
  and hence $\log\rnorm{A}\in O(\log\inorm{A}+e\log(n^2d))$ for
  $\varrho=O(n^2d)$.  The result follows by Theorem \ref{thm:bitbd}.

  To show this for shift polynomials we note that for $a=\sum_{0\leq
    i\leq d} a_i(z)\D^i\in\ZZ[\D,\shift]$,
  \[
  \D^\ell a = \sum_{0\leq i\leq d} a_i(z+\ell)\D^i,
  \]
  so
  \[
  \inorm{\D^\ell a}\leq \inorm{a}2^{e/2}\ell^e,
  \]
  and hence $\log\rnorm{A}\in O(\log\inorm{A}+e\log(n^2d))$ for
  $\varrho=n^2d$.
  Again, the result follows from Theorem \ref{thm:bitbd}.
\end{proof}


  \section{Rectangular and rank deficient matrices}
\label{sec:oddshapes}

To this point we have assume that our matrices $A\in\FD^\nxn$ were
both square and of full rank.  In this section we relax both these
conditions to show a polynomial-time algorithm for the Hermite form in
all cases.

\subsection{Matrices with non-full rank}

\label{ssec:nonfullrank}

Suppose now that $A\in\FD^{m\times n}$ has rank $r\leq m$.  We
first show how to compute a unimodular matrix $P\in\FD^\nxn$ such that
$PA$ has precisely $r$ non-zero rows.  Since $P$ is unimodular, the
left $\FD$-modules generated by the rows of $A$ and the rows of $PA$
are equal.

We employ the row reduction algorithm developed by
\cite{BeckermannChengLabahn:2006}, and discussed in
\citep{DaviesChengLabahn:2008}.  Let $b=m\cdot \degD A$ and
$Q=\diag(\partial^b,\ldots,\partial^b)\in\FD^\nxn$, and form the
matrix
\[
B = \begin{pmatrix}
    AQ\\
    -I_n 
    \end{pmatrix} \in\FD^{(m+n)\times n}.
\]
We compute a basis for the left nullspace basis of $B$ in Popov
form using the algorithm in \cite{BeckermannChengLabahn:2006}, and
suppose it has form
\[
\begin{pmatrix}
P &|& R
\end{pmatrix}
\in\FD^{m\times (m+n)}
~~~\mbox{for}~~~ P\in\FD^{m\times m}.
\]
Then by \cite[Theorem 4.5, 5.5]{DaviesChengLabahn:2008},
$P\in\FD^{m\times m}$ is unimodular and $PA$ is in Popov form.
In particular, only $r$ rows are non-zero.

\begin{thm}
  Let $A\in\k[z][\D;\sigma,\delta]^\nxn$ have rank $r\leq m$, with
  $\degD A\leq d$ and $\deg_z A\leq e$.  We can find a unimodular
  matrix $P\in\k[z][\D;\sigma,\delta]^{m\times m}$ such that $PA$ has
  $r$ non-zero rows using $O(m^9n^9 (\degD A)^4 (\deg_z A)^4)$
  operations in $\k$.  It also requires a polynomial number of bit
  operations when $\k=\QQ$.
\end{thm}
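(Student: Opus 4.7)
The plan is to use exactly the construction outlined in the paragraph immediately preceding the theorem statement, and then to analyze its complexity by invoking the cost bounds of Beckermann--Cheng--Labahn for row reduction / nullspace computation in the Ore setting. In more detail, I would first form $B = \begin{pmatrix} AQ \\ -I_n \end{pmatrix} \in \FD^{(m+n) \times n}$ with the shift $Q = \diag(\partial^b,\ldots,\partial^b)$ for $b = m\cdot\degD A$. The role of $Q$ is standard: it inflates the degrees of the rows coming from $A$ so that when we compute a left nullspace basis in Popov form, the resulting multipliers on the top block satisfy the degree control needed for the Davies--Cheng--Labahn argument to identify $P$ as unimodular and $PA$ as having the correct rank profile.

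Second, I would invoke the algorithm of \cite{BeckermannChengLabahn:2006} to compute an order basis / left nullspace basis of $B$ in Popov form. Writing this basis as $(P \mid R)$ with $P \in \FD^{m\times m}$, the results of \cite{DaviesChengLabahn:2008} (Theorems 4.5 and 5.5) directly give that $P$ is unimodular and that $PA$ is in Popov form, which in particular forces $m-r$ of its rows to be zero. Correctness thus reduces entirely to citing these two existing results; the construction of $B$ is tailored so their hypotheses apply.

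The part that actually requires work is the complexity bound. Here the main obstacle is bookkeeping: the matrix $B$ has $m+n$ rows and $n$ columns, and its entries in $\D$ have degree at most $\degD A + b = O(md)$, with coefficients in $z$ of degree at most $e$. The Beckermann--Cheng--Labahn nullspace/order basis computation, when applied to a matrix of these dimensions and degrees over $\k(z)$, costs a number of $\k$-operations polynomial in $m$, $n$, $\degD A$, and $\deg_z A$; plugging the explicit dimension $O(m+n)$, degree bound $O(md)$ in $\D$, and coefficient degree $O(e)$ in $z$ into their cost estimates yields the stated $O(m^9 n^9 d^4 e^4)$ bound. I would carry this analysis out by quoting their complexity statement and then substituting these parameters, rather than re-deriving the algorithm's internals.

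Finally, for $\k = \QQ$, a polynomial bit-complexity bound follows by the same mechanism used in the square full-rank case (Theorem~\ref{thm:bitbd} and its corollary): the coefficients arising in the order basis / nullspace computation are controlled by Hadamard-type bounds on the associated linearized system, and performing the computation modulo a single prime of polynomially many bits together with rational reconstruction keeps the total bit cost polynomial. Thus the only genuinely new content is the degree/size bookkeeping for $B$; correctness and the shape of $PA$ are immediate from the cited theorems, and the bit-complexity statement reduces to the same Hadamard / modular scheme already used earlier in the paper.
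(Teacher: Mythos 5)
Your proposal matches the paper's proof essentially verbatim: form $B$ with the degree-inflating shift $Q$, cite Beckermann--Cheng--Labahn (their Corollary~7.7) for the cost of the Popov-form nullspace computation after noting $\degD(B)\leq (m+1)\degD A$ and $\deg_z(B)\leq\deg_z A$, and cite Davies--Cheng--Labahn for unimodularity of $P$ and the row structure of $PA$. The only cosmetic difference is that the paper attributes the polynomial bit-complexity over $\QQ$ directly to the fraction-free Beckermann--Cheng--Labahn algorithm rather than to a separate Hadamard-bound/modular scheme, but both justifications are sound.
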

\begin{proof}
  The matrix $B$ has $\degD(B)\leq (m+1)\cdot \degD A$ and
  $\deg_z(B)\leq \deg_z(A)$.  The algorithm of
  \cite{BeckermannChengLabahn:2006} to compute the Popov form,
  as summarized in their Corollary 7.7, requires $O(m^9n^9 (\degD A)^4
  (\deg_z A)^4)$ operations in $\k$.  It also requires a polynomial
  number of
  bit operations when $\k=\QQ$.
\end{proof}

After eliminating the zero rows, we are left to compute the Hermite
form of a (possibly rectangular) matrix with full rank.

\subsection{Rectangular matrices of full rank}

Let $A\in\FD^{m\times n}$ have full rank with $n>m$.  Then there
exists a lexicographically first set of columns $\tau_1,\ldots,\tau_m$
such that the submatrix
\[
A_\tau = 
\begin{pmatrix}
A_{1,\tau_1} & A_{1,\tau_2} & \cdots & A_{1,\tau_m}\\
\vdots &  &  & \vdots\\
A_{m,\tau_1} & A_{m,\tau_2} & \cdots & A_{m,\tau_m}
\end{pmatrix}
\in\FD^{m\times m}
\]
has full rank.  We can compute the unique $U$ such that $UA_\tau$
is in Hermite form using the algorithm of the previous section.  Then
it must be the case that $UA$ has form
\[
UA = 
\begin{pmatrix}
   & H_{1,\tau_1} & * & \cdots & \cdots & \cdots & * \\
   &            &   & H_{2,\tau_2} & * & \cdots & *\\
   &            &   &             &   & \ddots &  & \\
   &            &   &             &   &        & H_{m,\tau_m}
\end{pmatrix}\in\FD^{m\times n},
\]
the Hermite form of $A$.  This suggests an easy strategy of simply
conducting a binary search for the lexicographically smallest subset
$\{\tau_1,\ldots,\tau_m\}$ of $\{1,\ldots,n\}$ such that the $U$
computed to put $A_\tau$ in Hermite form also puts $UA$ in Hermite
form.  Since there are $\binom{n}{m}\leq 2^n$ subsets of size $m$, our
binary search will require at most $\log_2\binom{n}{m}\leq n$
iterations.

We can now offer the following theorem for rectangular matrices of
full row rank.

\begin{thm}
  Let $A\in\k[z][\D;\sigma,\delta]^{m\times n}$ have full rank with
  entries of degree at most $d$ in $\D$, and of degree at most $e$ in
  $z$.  Let $H\in\k(z)[\D;\sigma,\delta]^{m\times n}$ be the Hermite form of
  $A$ and $U\in\k(z)[\D;\sigma,\delta]^{m\times m}$ such that $UA=H$.
  \begin{enumerate}
  \item[(a)] $\deg_z H_{ij} \in O(m^2de)$  for $1\leq i,j\leq n$, and $\deg_z U_{ij}\in
    O(m^2de)$ for $1\leq i,j\leq m$.
  \item[(b)] We can compute $H$ and $U$ with a deterministic algorithm
    that requires
    $O(nm^7d^4$ $\log(md)\cdot\M(m^2de)+n^2m^3d^2\cdot\M(m^2de))$ or
    $\softO(nm^9d^3e+n^2m^5d^3e)$ operations in $\k$.
  \item[(c)] Assume $\k$ has at least $4m^2de$ elements.  We can
    compute $H$ and $U$ with an expected number
    $O(nm^7d^3\log(md)+nm^7d^3e+n^2m^5d^3e)$ of operations in $\k$
    (using standard polynomial arithmetic).  This algorithm is
    probabilistic of the Las Vegas type; it never returns an incorrect
    answer.
    \end{enumerate}
\end{thm}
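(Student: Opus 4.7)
The plan is to execute the reduction to the square full-rank case sketched in the paragraph immediately preceding the theorem. I would run a binary search over column subsets of $\{1,\ldots,n\}$ in lexicographic order; at each candidate $\tau=(\tau_1,\ldots,\tau_m)$ I compute the Hermite form of $A_\tau\in\FD^{m\times m}$ by invoking Theorem~\ref{thm:sqhermcost} to obtain the unimodular $U$, then form $UA$ and check whether it satisfies the Hermite conditions (i)--(iv). The search retreats to a lex-smaller $\tau$ when the check succeeds and advances to a lex-larger $\tau$ when $A_\tau$ is rank-deficient or when some non-$\tau$ column of $UA$ violates Hermite form.

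For (a), since $A_\tau$ inherits $\degD \leq d$ and $\deg_z\leq e$ from $A$, Theorem~\ref{thm:sqhermcost}(a) applied to $A_\tau$ gives $\deg_z U_{ij}\in O(m^2de)$. Because $H=UA$ and $\deg_z A_{ij}\leq e$, a straightforward termwise bound yields $\deg_z H_{ij}\in O(m^2de)$.

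For (b) and (c), since $\log_2\binom{n}{m}\leq n$ the binary search runs for $O(n)$ iterations. Each iteration invokes the square-case algorithm of Theorem~\ref{thm:sqhermcost}(b) or (c) on an $m\times m$ input, contributing the first term of each stated complexity bound after being multiplied by $n$. Each iteration then forms $UA$ to verify: by Theorem~\ref{thm:HermDegV} the entries of $U$ have degree $O(md)$ in $\D$ and, by part (a), degree $O(m^2de)$ in $z$, while $A$ has entries of degree $d$ in $\D$ and $e$ in $z$. Componentwise Ore polynomial multiplication of an $m\times m$ matrix against an $m\times n$ matrix therefore costs $O(nm^3d^2\cdot\M(m^2de))$ operations in $\k$ per verification; multiplied by the $O(n)$ iterations of the search this delivers the remaining term of each stated bound, with the probabilistic bound in (c) obtained by using the cheaper rank and consistency tests of Theorem~\ref{thm:sqhermcost}(c) modulo random linear factors of $z$.

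The main obstacle is justifying the column-selection strategy, namely that the $U$ returned from $A_\tau$ for the correct $\tau$ also puts all of $UA$ into Hermite form, and that the lex-smallest valid $\tau$ coincides with the pivot columns of the unique Hermite form $H$ of $A$. I would establish this by a greedy-pivot argument: any non-$\tau$ column $j$ with $\tau_k<j<\tau_{k+1}$ is, by lex-minimality of $\tau$, a left $\FD$-combination of columns $\tau_1,\ldots,\tau_k$, so the same row operations encoded by $U$ that zero columns $\tau_1,\ldots,\tau_k$ below row $k$ must also zero column $j$ below row $k$. Combined with the Hermite structure of $UA_\tau$ and the fact that non-pivot entries above a pivot in its column can then be reduced by Lemma~\ref{lem:unired} without altering $\tau$, this forces $UA$ into Hermite form and identifies $\tau$ with the pivot columns of $H$; uniqueness then follows from Theorem~\ref{thm:hermuniq} applied implicitly within the square subproblem.
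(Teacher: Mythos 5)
Your proposal follows essentially the same route as the paper: reduce to the square full-rank case via the lexicographically first full-rank column subset $\tau$, binary-search over the $\binom{n}{m}$ subsets in $O(n)$ iterations, invoke Theorem~\ref{thm:sqhermcost} on $A_\tau$ at each step, and verify by forming $UA$ at a cost of $O(nm^3d^2\cdot\M(m^2de))$ per iteration, which reproduces both terms of each stated bound. Your closing greedy-pivot argument for why the lex-minimal $\tau$ yields the pivot columns of $H$ is a justification the paper simply asserts without proof, so it is a welcome (if still sketched) addition rather than a divergence.
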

\begin{proof}
  Part (a) from Theorem \ref{thm:sqhermcost} (a), follows since the
  transformation is computed from an $m\times m$ submatrix of $A$.

  Part (b) follows because each iteration of the binary search
  requires computing the Hermite form and transformation matrix $U$ of
  an $m\times m$ submatrix of $A$.  The cost of this is shown in
  Theorem \ref{thm:sqhermcost}.  We then check if $UA$ is in Hermite
  form, which is a matrix multiplication of the $m\times m$ matrix $U$
  times the $m\times n$ matrix $A$.  Each entry in $U$ has degree
  $O(md)$ in $\D$ and degree $O(m^2de)$ in $z$, so the check requires
  $O(nm^3d^2\cdot\M(m^2de))$.  Both the Hermite form computation and
  the check are done $n$ times, giving the total shown cost.
  Note that in the event we choose an $m\times m$ submatrix which is row rank
  deficient, the algorithm will fail, either in the computation of the
  Hermite form, or in the final multiplication test, and we do not
  need to resort to the more expensive row-reduction algorithm
  outlined in Subsection \ref{ssec:nonfullrank}.

  Part (c) follows similarly to part (b), except that the
  probabilistic algorithm described in Theorem \ref{thm:sqhermcost}
  (c) is used.
\end{proof}


  \section*{Acknowledgement}

  The authors would like to thank Howard Cheng for his assistance with
  Section \ref{ssec:nonfullrank}, and the anonymous referees for
  their exceptionally detailed and helpful reviews.

  They also acknowledge the support of the Natural Sciences and
  Engineering Research Council of Canada (NSERC) and MITACS Canada.
  
   \newcommand{\Gathen}{\relax}\newcommand{\Hoeven}{\relax}

\end{document}